\newtheorem{definition}{\textbf{Definition}}
\newtheorem{theorem}{\textbf{Theorem}}
\newtheorem{lemma}{\textbf{Lemma}}[subsection]
\newtheorem{assumption}{\textbf{Assumption}}
\newtheorem{proposition}{\textbf{Proposition}}
\newtheorem{remark}{\textbf{Remark}}[section]
\newenvironment{customass}[1]
{\innercustomass}
{\endinnercustomass}
\newcommand{\systembd}[9]{
	\left[
	\begin{array}{c|cc}
		#1 & #2 & #3   \\
		\hline
		#4 & #5 & #6   \\
		#7 & #8 & #9
	\end{array}
	\right]
}
\newcommand{\pnorm}{\mathbf \Delta}
\newcommand{\gvec}{\boldsymbol{\gamma}}
\newcommand{\ito}{\diamond_{\mathsmaller I}}
\newcommand{\strat}{\diamond_{\mathsmaller S}}
\newcommand{\cov}[1]{\mathbf #1}
\newcommand{\sscov}[1]{\bar{\mathbf #1}}
\newcommand{\ecov}[1]{\hat{\mathbf #1}}
\newcommand{\expec}[1]{\mathbb E \left[#1\right]}
\newcommand{\mnorm}[1]{\left|\left| #1\right|\right|}
\newcommand{\vnorm}[1]{\left|\left| #1\right|\right|}
\newcommand{\tr}[1]{\text{tr}\left( #1 \right)}
\newcommand{\tv}[2]{\mathcal T \mathcal V_0^{#1} \left(#2 \right)}
\newcommand{\qv}[2]{\mathcal Q \mathcal V_0^{#1} \left(#2 \right)}
\newcommand{\sqrtp}[1]{\left(#1\right)^{\frac{1}{2}}}
\newcommand*{\Scale}[2][4]{\scalebox{#1}{$#2$}}%
\title{\LARGE \bf
	An Input-Output Approach to \\ Structured Stochastic Uncertainty in Continuous Time
}
\author{Maurice Filo and Bassam Bamieh
	\thanks{This work is supported by NSF Awards ECCS-1408442.}
	\thanks{Maurice Filo and Bassam Bamieh are with the Department of Mechanical Engineering, University of California, Santa Barbara,
		Santa Barbara, California 93117, USA.       {\it\small filo@umail.ucsb.edu, bamieh@engineering.ucsb.edu}
}	}
\begin{document}
\maketitle
\thispagestyle{empty}
\pagestyle{empty}

\begin{abstract}
	We consider the continuous-time setting of linear time-invariant (LTI) systems in feedback with multiplicative stochastic uncertainties. The objective of the paper is to characterize the conditions of Mean-Square Stability (MSS) using a purely input-output approach, i.e. without having to resort to state space realizations. This has the advantage of encompassing a wider class of models (such as infinite dimensional systems and systems with delays). The input-output approach leads to uncovering new tools such as stochastic block diagrams that have an intimate connection with the more general Stochastic Integral Equations (SIE), rather than Stochastic Differential Equations (SDE). Various stochastic interpretations are considered, such as It\=o and Stratonovich, and block diagram conversion schemes between different interpretations are devised. The MSS conditions are given in terms of the spectral radius of a matrix operator that takes different forms when different stochastic interpretations are considered.
\end{abstract}

\section{INTRODUCTION}
Linear Time-Invariant (LTI) systems with stochastic disturbances is a powerful modeling technique that is used to analyze and control a large class of physical systems. While additive disturbances are most commonly used to model process and measurement noise in a system, multiplicative disturbances are often necessary to model stochastic uncertainties in the system parameters (such as coefficients in dynamical equations). LTI systems driven by additive stochastic processes are more common in the literature; whereas simultaneous additive and multiplicative disturbances are relatively less addressed. The present paper develops a methodology to study the mean-square stability of continuous-time systems with both additive and multiplicative disturbances, while adopting different stochastic interpretations (such as It\=o and Stratonovich). 
\begin{figure}[h!]
	\centering
	\begin{tabular}{ll}
		\includegraphics[scale = .67]{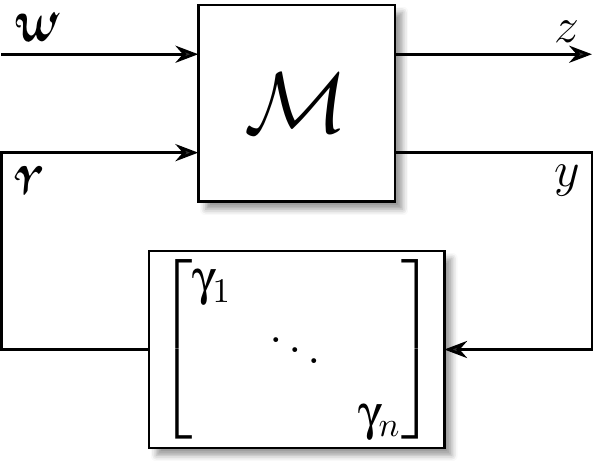} & 
		\includegraphics[scale = .67]{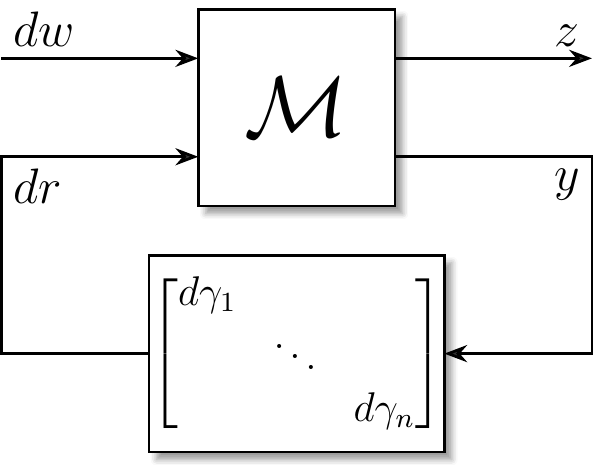} \\
		\footnotesize{(a) White Process Representation} & \footnotesize{(b) Wiener Process Representation}
	\end{tabular}
	\caption{\footnotesize{The general continuous-time setting of linear systems with both additive and multiplicative stochastic disturbances. Both block diagrams describe the same setting, given in (\ref{Eqn: White Process Feedback}) and (\ref{Eqn: Wiener Process Feedback}), using white processes (to the left) and Wiener processes (to the right), respectively. The LTI system $\mathcal M$ is in feedback with multiplicative stochastic gains represented here as a diagonal matrix. In Figure~(a), {\Fontauri{w}} is an additive stationary white process, while $\upgamma_1, \cdots, \upgamma_n$ are multiplicative stationary white processes. In Figure~(b), $dw$ represents the differential of an additive Wiener process, while $d\gamma_1, \cdots, d\gamma_n$ represent the differentials of (possibly correlated) Wiener processes that enter the dynamics multiplicatively. The  signal $z$ represents an output whose variance quantifies a performance measure.}   }
	\label{Fig: Concept Block Diagram}
\end{figure}

The general setting we consider in this paper is the continuous-time analog of that presented in \cite{bamieh2018structured} and is depicted in Figure~\ref{Fig: Concept Block Diagram}(a). An LTI system is in feedback with  stochastic gains $\upgamma_1(t), ... \upgamma_n(t)$, that are assumed to be ``white" in time (i.e. temporally independent) but possibly mutually correlated. Another set of stochastic disturbances are represented by the vector-valued signal {\Fontauri w} which is also assumed to be white but enters the dynamics additively. The signal $z$ is an output whose variance quantifies a performance measure. The feedback term is then a diagonal matrix with the individual gains $\{\upgamma_i\}$ appearing on the diagonal. Such gains are commonly referred to as structured uncertainties. Note that if the gains are deterministic (but uncertain), we obtain the general setting considered in the robust control literature (e.g. \cite{zhou1996robust}). The main objective of the present paper is to derive the necessary conditions of Mean-Square Stability (MSS) for systems taking the form of Figure~\ref{Fig: Concept Block Diagram}(a). The treatment is carried out using a purely input-output approach (i.e. without giving $\mathcal M$ a state space realization). This has the advantage of encompassing a wider class of models $\mathcal M$ (e.g. infinite dimensional systems).

In a discrete-time setting, there is no ambiguity of defining white (i.e. temporally independent) signals. However, in a continuous-time setting, technical issues arise because white signals are not mathematically well defined when they enter the dynamics multiplicatively. Hence, the block diagram in Figure~\ref{Fig: Concept Block Diagram}(a) is only used to pose the problem setup in an analogous fashion to the discrete-time setting in \cite{bamieh2018structured}, but at the cost of abandoning mathematical rigor. In fact, the equations describing Figure~\ref{Fig: Concept Block Diagram} can be written using the white processes {\Fontauri w} and $\{\upgamma_i\}$ as
\begin{align} \label{Eqn: White Process Feedback}
\begin{bmatrix} z \\ y \end{bmatrix} &= \mathcal M \begin{bmatrix} \text{\Fontauri w} \\ \text{\Fontauri r} \end{bmatrix} \Leftrightarrow 
\begin{bmatrix} z(t) \\ y(t) \end{bmatrix} = \mathlarger{\int_{0}^{t}} M(t-\tau)
\begin{bmatrix} \text{\Fontauri w}(\tau) \\ \text{\Fontauri r}(\tau) \end{bmatrix} d\tau \nonumber \\
\text{\Fontauri r}(t) &= \mathcal D \big( \boldsymbol \upgamma(t)\big) y(t), 
\end{align}
where $M$ is the impulse response of $\mathcal M$, and $\mathcal D \big( \boldsymbol \upgamma(t)\big)$ is a diagonal matrix whose elements are equal to those of $\boldsymbol \upgamma(t): = \begin{bmatrix} \upgamma_1(t) & \cdots &  \upgamma_n(t) \end{bmatrix}^*$. To resort back to mathematical rigor, we think of the white processes {\Fontauri w} and $\{\upgamma_i\}$ as the formal derivatives of Wiener processes (or Brownian motion) that are mathematically well defined \cite{oksendal2003stochastic}. More precisely, define
\begin{align} \label{Eqn: Increments Definition}
\upgamma_i(t) := \frac{d\gamma_i(t)}{dt}; \quad \text{\Fontauri w}(t)  := \frac{dw(t)}{dt}; \quad \text{\Fontauri r}(t) := \frac{dr(t)}{dt}, 
\end{align}
such that $\boldsymbol \gamma(t) := \begin{bmatrix} \gamma_1(t) & \cdots & \gamma_n(t) \end{bmatrix}^*$ and $w(t)$ represent nonstandard, vector-valued Wiener processes (i.e. their covariances do not have to be the identity matrix). Furthermore, $r(t)$ will be shown (Section~\ref{Section: Independence of increments of r}) to have temporally independent increments when $\mathcal M$ is causal and the It\=o interpretation is adopted. Hence, the equations can be rewritten using differential forms as
\begin{align} \label{Eqn: Wiener Process Feedback}
&\begin{bmatrix} z \\ y \end{bmatrix} = \mathcal M \begin{bmatrix} dw \\ dr \end{bmatrix} \Leftrightarrow 
\begin{bmatrix} z(t) \\ y(t) \end{bmatrix} = \mathlarger{\int_{0}^{t}} M(t-\tau)
\begin{bmatrix} dw(\tau) \\ dr(\tau) \end{bmatrix} \nonumber\\
&dr(t) = \mathcal D \big(d{\boldsymbol \gamma}(t)\big) y(t). 
\end{align}
These equations are now mathematically well defined when given some desired interpretation such as in the sense of It\=o or Stratonovich. It will be shown in Section~\ref{Section: MSS Conditions} that different interpretations produce different conditions of MSS.

We should note the other common and related models in the literature which are usually done in a state space setting and can be represented as Stochastic Differential Equations (SDEs). One such model is a linear system with a random ``A matrix" such as  
\begin{align} \label{Eqn: Random A Matrix} \dot x(t) = A(t) x(t)  + B \text{\Fontauri w}(t), \end{align}
where $A(t)$ is a matrix-valued stochastic process independent of $\{x(\tau), \tau \leq t\}$. One can always rewrite $A(t)$ in terms of scalar-valued stochastic processes so that
$$\dot x(t) = \big(A_0 + \upgamma_1(t) A_1 + \cdots + \upgamma_n(t) A_n\big) x(t) + B \text{\Fontauri w}(t).$$
If the matrices $A_1,\ldots,A_n$ are all of rank 1 (e.g. $A_i=b_ic_i$, for column and row vectors $b_i$, $c_i$  respectively, $i=1,\ldots,n$), then it is well-known~\cite{zhou1996robust} that the  model~(\ref{Eqn: Random A Matrix}) can always be reconfigured like the block diagram of Figure~\ref{Fig: Concept Block Diagram}(a) by setting 
$$ \mathcal M ~=~ \systembd{A_0}{B}{B_0}{C}{0}{0}{C_0}{0}{0}, $$
where $B_0:=\begin{bmatrix} b_1 & \cdots & b_n \end{bmatrix}$ and $C_0:= \begin{bmatrix} c_1^* & \cdots & c_n^* \end{bmatrix}^*$. In the example above, we have chosen $z=Cx$. If the matrices $\{A_i\}_{i=1}^n$ are not rank one,  it is still possible to reconfigure~(\ref{Eqn: Random A Matrix}) into a diagram like Figure~\ref{Fig: Concept Block Diagram}(a), but with the perturbation blocks being ``repeated''~\cite{packard1988structured}. 

When the processes $\{\upgamma_i\}$ and {\Fontauri w} are ``white" in time, we resort to the configuration of Figure~\ref{Fig: Concept Block Diagram}(b) to express the stochastic disturbances in terms of Wiener processes. Exploiting (\ref{Eqn: Increments Definition}) yields
\begin{align} 
\mathcal M: &\left\{
\begin{aligned}
dx(t)& = A_0x(t)dt + B_0dr(t) + B dw(t) \\
y(t) &= C_0x(t) \\	
z(t) &= C x(t)			
\end{aligned}
\right.  \label{Eqn: State SDE}\\
&dr(t)= \mathcal D\big(d\boldsymbol \gamma(t)\big) y(t).  \label{Eqn: Feedback SDE}
\end{align}
Equations (\ref{Eqn: State SDE}) and (\ref{Eqn: Feedback SDE}) describe the block diagram of Figure~\ref{Fig: Concept Block Diagram}(b) when $\mathcal M$ is given as a state space realization. In fact, the impulse response can be easily calculated to be
$$ M(t) := \begin{bmatrix} C \\ C_0 \end{bmatrix} e^{A_0 t} \begin{bmatrix} B & B_0 \end{bmatrix},$$
thus showing that models like those given in (\ref{Eqn: Random A Matrix}) are a special case of the purely input-output approach that we consider in this paper. On a side note, observe that the underlying stochastic dynamics of the state $x$ in (\ref{Eqn: State SDE}) and (\ref{Eqn: Feedback SDE}) can be rewritten in a single SDE, that involves both additive and multiplicative disturbances, as
\begin{align} \label{Eqn: Single SDE}
dx(t) = A_0x(t)dt + B_0 \mathcal D\big(Cx(t)\big) d\boldsymbol \gamma(t) + B dw(t).
\end{align}
Particularly, \cite{el1992stability} studied SDEs having the form of (\ref{Eqn: Single SDE}) interpreted in the sense of It\=o, where $B=0$ (i.e. no additive noise) and $\boldsymbol \gamma$ is ``spatially uncorrelated", i.e. $\mathbb E[\gamma_i \gamma_j] = 0, \forall i\neq j$.

Our goal in this paper is to extend the machinery developed in \cite{bamieh2018structured} to provide a rather elementary, and purely input-output treatment and derivation of the 
necessary and sufficient conditions of MSS for systems like that of Figure~\ref{Fig: Concept Block Diagram}. Furthermore, our treatment covers both It\=o and Stratonovich interpretations. It is shown that the conditions of MSS can be stated in terms of the spectral radius of a finite dimensional linear operator defined in Section~\ref{Section: MSS Conditions}. It is also shown that this operator takes different forms when different stochastic interpretations are prescribed (such as It\=o or Stratonovich).  

The paper is organized as follows. First we provide some useful definitions and notation. Then, in Section~\ref{Section: Problem}, we give a precise formulation of the problem statement by setting up a general ``stochastic block diagram" and describing the underlying assumptions. In Section~\ref{Section: Results}, we present the main results of the paper that can be divided into two parts. The first part shows a block diagram conversion scheme from Stratonovich to It\=o interpretations, and the second part states the conditions of mean-square stability. The special cases of state space realizations are then treated in Section~\ref{Section: SS}.  Sections~\ref{Section: MS Equivalence} and~\ref{Section: LGO and MSS} provide the detailed derivations that explain the results. Finally, we conclude in Section~\ref{Section: Conclusion}.

\section{Preliminaries and Notation} \label{Sec: Preliminaries}
All the signals considered in this paper are  defined on the semi-infinite, continuous-time interval $\mathbb R^+ := [0,+\infty)$. The dynamical systems considered are maps between various signal spaces over the time interval $\mathbb R^+$.
Unless stated otherwise, all stochastic processes in this paper are random vector-valued functions of (continuous) time.

\subsection*{Notation Summary}
\subsubsection{\textbf{Variance \& Covariance Matrix of a Signal}} 
If $v$ is a stochastic signal, then its instantaneous variance and covariance matrix are denoted by the lowercase and uppercase bold letters respectively
$$ \cov v(t) := \expec{v^*(t)  v(t) } \quad \text{and} \quad \cov V(t)  := \expec{v(t)  v^*(t)}, $$
where $v^*$ denotes the transpose of $v$. The entries of $\cov V(t)$ are the mutual correlations of the vector $v(t)$, and are sometimes referred to as \textit{spatial correlations}. Note that $\tr{\cov V(t)} = \cov v(t)$.
\subsubsection{\textbf{Variance \& Covariance Matrix of a Differential Signal}}
If the differential $du$ of a stochastic signal $u$ appears in a stochastic block diagram (see Figure~\ref{Fig: MSS Setting} for example), its instantaneous variance and covariance are represented as
$$ \expec{du^*(t) du(t) }:= \cov u(t) dt \quad \text{and} \quad  \expec{du(t) du^*(t)} := \cov U(t) dt, $$
respectively. This is a compact (differential) notation for
$$ \expec{u^*(t) u(t)}:= \int_0^t \cov u(\tau) d\tau; \quad \expec{u(t) u^*(t)} := \int_0^t \cov U(\tau) d\tau. $$
\subsubsection{\textbf{Steady State Variance \& Covariance Matrix}}
The asymptotic limits of the instantaneous variance and covariance matrix, when they exist, are denoted by an overbar, i.e.
$$\sscov u:= \lim_{t\to\infty} \cov u(t) \quad \text{and} \quad \sscov{U} := \lim_{t\to\infty} \cov U(t).$$
\subsubsection{\textbf{Second Order Process}} A process $v$ is termed \textit{second order} if the entries of its covariance matrix, $\cov V(t)$, are finite for each $t\in \mathbb R^+$. 
\subsubsection{\textbf{Probability Space}}
Let $(\Omega, \mathcal F, p)$ be a complete probability space with $\Omega$ being the sample space, $\mathcal F$ the associated $\sigma-$algebra and $p$ the probability measure. Let $L_2(p)$ denote the space of vector-valued random variables with finite second order moments. Note that $L_2(p)$ is a Hilbert space. 
\subsubsection{\textbf{Equalities \& Limits in the Mean-Square Sense}}
Two stochastic processes $x$ and $y$ are said to be equal in the mean-square sense if
$\expec{\vnorm{x-y}^2} = 0$, where throughout the paper $\vnorm{.}$ denotes the $\ell^2-\text{norm}$ for vectors and the spectral norm for matrices. 

A  sequence of second order stochastic processes, $\{x_N\}$, is said to converge to $\bar x \in L_2(p)$ in the mean-square sense iff $\lim_{N\to\infty} \vnorm{x_N - \bar x}^2 = 0$.
\subsubsection{\textbf{White Process}} A stochastic process $\upgamma$ is termed \textit{white} if it is uncorrelated at any two distinct times, i.e. $ \expec{\upgamma(t) \upgamma^*(\tau)}= \cov \Gamma \delta(t-\tau)$, where $\delta$ is the Dirac delta function. Note that in the present context, a white process $\upgamma$ may still have spatial correlations, i.e. its instantaneous covariance matrix $\cov \Gamma$ need not be the identity. 
\subsubsection{\textbf{Vector-Valued Wiener Process}} \label{Section: Wiener Process} In a continuous-time setting, calculus operations on a white process entering the dynamics multiplicatively are not mathematically well defined. Hence, it is useful to represent a white process as the formal derivative of a Wiener process, i.e. $\upgamma(t) := \frac{d\gvec(t)}{dt}$, where $\gvec$ is a zero-mean, vector-valued Wiener process with an instantaneous covariance matrix $\expec{\gvec(t) \gvec^*(t)} = \cov \Gamma t$. This can be equivalently written in differential form as $\expec{d\gvec(t) d\gvec^*(t)} = \cov \Gamma dt$. Note that $\gvec$ is said to have temporally independent increments, i.e. its differentials $\big(d\gvec(t), d\gvec(\tau) \big)$ are independent when $t\neq \tau$.
\subsubsection{\textbf{Partitions of Time Intervals}} \label{Section: Partition}
Let $\mathcal P_N[0,t]$ denote an arbitrary partition of the time interval $[0,t]$ into $N$ subintervals $[t_k,t_{k+1}]$ for $k=0, 1, \cdots, N-1$, such that $0=t_0<t_1<\cdots<t_N=t$. The partition step-size is denoted by $\Delta_k:= t_{k+1}- t_k$ and the norm of the partition $\mathcal P_N[0,t]$ is denoted by the bold letter $\cov \Delta$ defined as
$ \cov \Delta := || \mathcal P_N[0,t] || = \sup_k\Delta_k.$
Note that $\lim_{N\to\infty}{\cov \Delta} = 0$. 
\subsubsection{\textbf{Notation for Signals and Increments on $\mathcal P_N[0,t]$}} \label{Section: Increments}
With slight abuse of notation, a continuous-time stochastic signal $\{u(\tau), 0\leq \tau \leq t\}$ is represented at node $t_k$ of the partition $\mathcal P_N[0,t]$ as $u_k := u(t_k)$ for $k = 0, 1, \cdots, N$. The increments of $\{u(\tau), 0\leq\tau\leq t\}$ at $t_k$ are denoted by $\tilde u_k := u(t_{k+1}) - u(t_k)$ for $k=0,1,\cdots, N-1$, and they represent a finite approximation of the differential form $\{du(\tau), 0 \leq \tau \leq t \}$. 

A continuous-time stochastic process $u$ is said to have \textit{temporally independent increments} if $\big(du(t), du(\tau)\big)$ are independent whenever $t\neq \tau$. This implies that, on the partition $\mathcal P_N[0,t]$, $(\tilde u_k, \tilde u_l)$ are independent whenever $k\neq l$. 
\subsubsection{\textbf{Stochastic Integrals}} \label{Section: Interpretations} Calculus operations on a Wiener process are mathematically well defined when some stochastic interpretation is prescribed (such as It\=o or Stratonovich). Particularly, we distinguish It\=o and Stratonovich integrals using the symbols "$\ito$" and "$\strat$", respectively. More precisely, let $v$ be a vector-valued second order stochastic process and $\gvec$ be a vector-valued Wiener process. If $\Gamma(t) := \mathcal D\big(\gvec(t)\big)$ is a diagonal matrix whose entries are equal to those of $\gvec(t)$, then the integral ``$\int_0^t d\Gamma(\tau) v(\tau)$" may be interpreted differently using partial sums as
\begin{align} 
\int_0^t d\Gamma(\tau) \ito v(\tau) &:= \lim_{N\to\infty} \sum_{k=0}^{N-1} \tilde \Gamma_k v_k  \label{Eqn: Ito Integrals Definition}\\
\int_0^t d\Gamma(\tau) \strat v(\tau) &:= \lim_{N\to\infty} \sum_{k=0}^{N-1}\tilde \Gamma_k \frac{v_k + v_{k+1}}{2}. \label{Eqn: Stratonovich Integral Definition}
\end{align}
The partial sums are constructed using a partition $\mathcal P_N[0,t]$ as described in Section~\ref{Section: Partition} and by following the notation developed in Section~\ref{Section: Increments} for signals and increments. 
\subsubsection{\textbf{Quadratic Variation}} \label{Section: Definition QV}
The quadratic variation, at time $t$, of a stochastic process $v$ is denoted by $\langle v \rangle (t)$ and is defined using a partition $\mathcal P_N[0,t]$ as
$$ \langle v \rangle (t) := \lim_{N\to \infty} \sum_{k=0}^{N-1} \vnorm{\tilde v_k}^2.$$
\subsubsection{\textbf{Hadamard Product and the Diagonal Operator}}
For any two matrices $A$ and $B$ of the same dimensions, their Hadamard (or element-by-element) product is denoted by ${A \circ B}$. 
For any vector $v$ (resp. square matrix $V$), $\mathcal D(v)$ (resp. $\mathcal D(V)$) denotes a diagonal matrix whose diagonal elements are equal to $v$ (resp. diagonal entries of $V$). 

\section{Problem Formulation} \label{Section: Problem}
In this section, we first provide a precise definition for Mean-Square Stability (MSS) from a purely input/output approach. Then we present a ``stochastic block diagram" formalism that can be given a desirable interpretation by prescribing a suitable stochastic calculus (It\=o or Stratonovich). 
\subsection{Input-Output Formulation of MSS}
Let $\mathcal M$ be a causal LTI (MIMO) system. It is defined as a linear operator that acts on the differential of a second order stochastic signal $u$, denoted by $du$. Its action is defined by the stochastic convolution integral
\begin{equation} \label{Eqn: Stochastic Convolution}
y(t) = \big(\mathcal M du\big)(t) \Longleftrightarrow y(t) = \int_{0}^t M(t-\tau) ~ du(\tau),
\end{equation}
where $M$ is a deterministic matrix-valued function denoting the impulse response of $\mathcal M$. Without loss of generality,  zero initial conditions are assumed throughout this paper. When $u$ is zero-mean and has independent increments such that $\expec{du(t) du^*(\tau)} = 0 ~\forall t\neq \tau$ and $\expec{du(t)du^*(t)} = \cov U(t) dt$, a standard calculation relates the input and output instantaneous covariances as 
\begin{align}
	\cov{Y}(t)  =    \int_{0}^t    M(t-\tau)  ~\cov{U}(\tau)~ M^*(t-\tau) d\tau.	\label{Eqn: IO Cov}	
\end{align}
Note that (\ref{Eqn: IO Cov}) holds for any stochastic interpretation (eg. It\=o or Stratonovich) of the stochastic integral in (\ref{Eqn: Stochastic Convolution}) as shown in Appendix~\ref{Section: Stochastic Convolution}. Therefore, the action of $\mathcal M$ as described in (\ref{Eqn: Stochastic Convolution}) is not given a particular stochastic interpretation throughout the paper. 
Unlike (\ref{Eqn: Stochastic Convolution}), this matrix convolution relationship is deterministic, and it is only valid when the input $du$ is temporally independent (i.e. $u$ has independent increments). Taking the trace of both sides of (\ref{Eqn: IO Cov}) yields
\begin{align*}
	\cov y(t) &= \tr{\cov Y(t)} = \int_0^t \text{tr}\big(M(t-\tau) \cov U(\tau) M^*(t-\tau)\big) d\tau \\
	&= \int_0^t \text{tr} \big(M^*(t-\tau) M(t-\tau) \cov U(\tau)  \big) d\tau \\
	& \leq \int_0^t \text{tr} \big(M^*(t-\tau) M(t-\tau) \big) \text{tr} \big( \cov U(\tau) \big)d\tau \\
	& \leq \int_0^\infty \text{tr} \big(M^*(t-\tau) M(t-\tau)\big) d\tau \sup_{0\leq \tau \leq \infty} \cov u(\tau),
\end{align*}
where the first inequality holds because for any two positive semidefinite matrices $A$ and $B$, we have ${\tr{AB} \leq \tr{A} \tr{B}}$ \cite[Thm 1]{coope1994matrix}. 
The calculation above motivates the following definition for input/output MSS when the input is temporally independent.
\begin{definition} \label{Def: IO MSS}
	A causal LTI system $\mathcal M$ is \textit{Mean-Square Stable} (MSS) if for each input $du$, representing  the differential of a stochastic process with independent increments and uniformly bounded variance, the  output process $y=\mathcal M du$ has a uniformly bounded variance, i.e. there exists a constant $c$ such that $\cov y(t) \leq c~\sup_\tau \cov u(\tau) $.
\end{definition}
	
It is easy to check that $\mathcal M$ is MSS in the sense of Definition~\ref{Def: IO MSS} if and only if $\|\mathcal M\|_2$ is finite, where $\vnorm{.}_2$ denotes the $H^2-\text{norm}$. When MSS holds, the output covariance has a finite steady-state limit $\sscov Y$ whenever the input covariance has a finite steady-state limit $\sscov U$. From (\ref{Eqn: IO Cov}), it is straight forward to see that the steady-state covariances (if they exist) are related as
\begin{align} \label{Eqn: SS IO COv}
	\sscov Y = \int_0^\infty M(\tau) \sscov U M^*(\tau) d\tau.
\end{align}

\subsection{Stochastic Feedback Interconnection}
Consider the ``stochastic block diagram" depicted in Figure~\ref{Fig: MSS Setting} where the forward block represents a causal LTI system which is in feedback with multiplicative stochastic gains represented here as the differential of a diagonal matrix denoted by $d \Gamma(t)$ where
\begin{align} \label{Eqn: dGamma}
	d\Gamma(t) :=\mathcal D\big(d\gvec(t)\big)
	\quad \text{and} \quad  d\gvec(t) := \begin{bmatrix} d\gamma_1(t) & \cdots & d\gamma_n(t) \end{bmatrix}^*.
\end{align}  
Furthermore, a different type of stochastic disturbance enters the dynamics additively and is represented in Figure~\ref{Fig: MSS Setting} as the differential of $w$.
\begin{figure}[h!]
	\centering
	\includegraphics[scale=0.58]{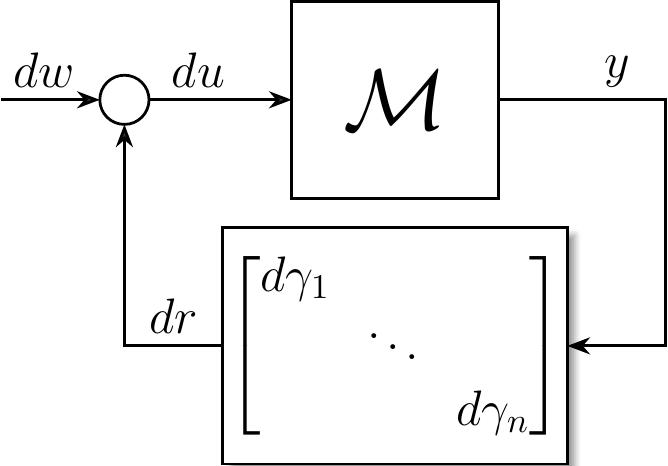}
	\caption{\footnotesize{A continuous-time setting for a causal LTI system $\mathcal M$ in feedback with stochastic multiplicative gains $\{d\gamma_i\}$ that represent the differential forms of, possibly mutually correlated, Wiener processes. The equations describing the block diagram are given in (\ref{Eqn: Equations of MSS Setting}).}}
	\label{Fig: MSS Setting}
\end{figure}

The main objective of this paper is to investigate the MSS of Figure~\ref{Fig: MSS Setting} under the following assumptions
\begin{itemize}
	\item
	\begin{assumption} \label{Ass: M}
		$\mathcal M$ is a causal LTI (MIMO) system whose impulse response $M$ belongs to the class $\mathcal C$ of deterministic, matrix-valued functions defined in Appendix~\ref{Section: Variations of M}. Note that for such $M$, $\exists$ a continuous scalar function $c_M$ such that $\sup\limits_{0\leq \tau \leq t} \mnorm{M(\tau)} = c_M(t).$
	\end{assumption}
	\item 
	\begin{assumption} \label{Ass: gamma}
		$\gvec(t) := \begin{bmatrix} \gamma_1(t) & \cdots & \gamma_n(t) \end{bmatrix}^*$ is a zero-mean, vector-valued Wiener process with an instantaneous covariance $\expec{\gvec(t) \gvec^*(t)}:= \cov \Gamma t$ which can be equivalently written as $\expec{d\gvec(t) d\gvec^*(t)} = \cov \Gamma dt$ (refer to Section~\ref{Section: Wiener Process}). Note that $\cov \Gamma$ is a constant positive semidefinite matrix.
	\end{assumption}
	\item 
	\begin{assumption} \label{Ass: w}
		$w$ is a zero-mean, vector-valued Wiener process with a (possibly) time-varying instantaneous covariance matrix, i.e. $\expec{dw(t)dw^*(t)} = \cov W(t)dt$, where $\cov W$ is a positive semidefinite matrix whose entries remain bounded for all time. Furthermore, $\cov W$ is assumed to be monotone, i.e. if $t_1 \leq t_2$ then $\cov W(t_1) \leq \cov W(t_2)$.
	\end{assumption}
	\item 
	\begin{assumption} \label{Ass: Uncorrelated}
		$\gvec$ and $w$ are uncorrelated for all time.
	\end{assumption}
\end{itemize}
Throughout the paper, whenever the Stratonovich interpretation is adopted, a more restrictive assumption on $M$ is required for reasons that will become apparent in Section~\ref{Section: MS Equivalence}. Thus Assumption~\ref{Ass: M} is replaced by
\begin{itemize}
	\item 
	\begin{customass}{1$^\prime$} \label{Ass: M Lipschitz}
		$M$ is Lipschitz continuous. 
	\end{customass}
\end{itemize}
Note that the class of Lipschitz continuous functions is more restrictive than class $\mathcal C$ defined in Appendix~\ref{Section: Variations of M}. In fact, it is fairly straightforward to see that if $M$ is Lipschitz continuous, then $M \in \mathcal C$.

The equations describing the block diagram in Figure~\ref{Fig: MSS Setting} can be written as
\begin{align} \label{Eqn: Equations of MSS Setting}
	\left\{
	\begin{aligned} 
		y(t) &= \left(\mathcal M du\right)(t)\\
		du(t) &= dw(t) + dr(t)\\
		dr(t) &= d\Gamma(t) y(t) .
	\end{aligned}  \right. 
\end{align}
Note that, without prescribing a stochastic interpretation for the calculus operations on the Wiener processes $w$ and $\Gamma$, the set of equations in (\ref{Eqn: Equations of MSS Setting}) are not sufficient to fully describe the underlying stochastic dynamics. In this paper, we consider the two most common interpretations named after It\=o and Stratonovich; however, the analysis can be generalized to other interpretations as well. We encode the stochastic interpretations in (\ref{Eqn: Equations of MSS Setting}) by rewriting them as
\begin{align} \label{Eqn: Equations of MSS Setting with Interp}
	\left\{
	\begin{aligned} 
		y(t) &= \left(\mathcal M du\right)(t)\\
		du(t) &= dw(t) + dr(t)\\
		dr(t) &= d\Gamma(t) \diamond y(t); \qquad \qquad \text{for} \quad \diamond = \{\ito, \strat\},
	\end{aligned}  \right.
\end{align}
where the last equation is the differential form of an integral equation that can be written as 
$$r(t) = \int_0^t d\Gamma(\tau) \diamond y(\tau), \qquad \text{where} \quad \diamond = \{\ito,\strat\}.$$
Refer to Section~\ref{Section: Interpretations} for an explanation of the different interpretations. Note that We close this section by giving a definition for MSS of the stochastic feedback system in Figure~\ref{Fig: MSS Setting} by following the convention given in \cite{desoer1975feedback}.
\begin{definition}
	Consider the stochastic feedback interconnection in Figure~\ref{Fig: MSS Setting} satisfying Assumptions \ref{Ass: M}-\ref{Ass: Uncorrelated}. The overall feedback system is said to be MSS if all the signals in the loop, i.e. $du, dr$ and $y$ have uniformly bounded variances. More precisely, there exists a constant $c$ such that
	$$ \text{max} \{\vnorm{\cov u}_\infty, \vnorm{\cov r}_\infty, \vnorm{\cov y}_\infty\} \leq c \vnorm{\cov w}_\infty.$$
\end{definition}
The next section characterizes the conditions of MSS for Figure~\ref{Fig: MSS Setting} for different stochastic interpretations. 

\section{Main Results} \label{Section: Results}
Observe that the set of equations (\ref{Eqn: Equations of MSS Setting with Interp}) can be rewritten as a single equation
\begin{equation} \label{Eqn: Single SIE}
	\begin{aligned}
		y(t) &= \int_0^t M(t-\tau) dw(\tau) + \int_0^t M(t-\tau) \diamond d\Gamma(\tau)  y(\tau); \\ & \qquad \qquad \qquad \qquad \qquad 
		\text{for}~\diamond = \{\ito, \strat\}.
	\end{aligned}
\end{equation}
Equation (\ref{Eqn: Single SIE}) is a linear Stochastic Integral Equation (SIE) of Volterra type. The It\=o version of (\ref{Eqn: Single SIE}) has been addressed in the literature (\cite{ito1979existence}, \cite{berger1980volterra1}, \cite{berger1980volterra2}, \cite{berger1979theorems}). For example, it is easy to check that (\ref{Eqn: Single SIE}), interpreted in the sense of It\=o, has a unique solution \cite[Thm 5A]{berger1979theorems} under the assumption that $M$ is finite over bounded intervals (Assumption~\ref{Ass: M}). However, SIEs interpreted in the sense of Stratonovich are less common in the literature. In contrast, SDEs interpreted in the sense of Stratonovich \cite{willems1973mean} are analyzed by converting them to their equivalent It\=o representation using the conversion formulas that were derived several decades ago (see e.g. \cite{stratonovich1966new}). In the present paper, the analysis is carried out from a purely input-output approach, and thus a more general conversion formula is required to convert an SIE interpreted in the sense of Stratonovich to its equivalent It\=o counterpart. In this section, we first describe the conversion scheme, then state the MSS conditions of Figure~\ref{Fig: MSS Setting} when different stochastic interpretations are adopted. 

\subsection{Block Diagram Conversion from Stratonovich to It\=o Interpretations} \label{Section: Ito Stratonovich Conversion}
Consider the block diagram in Figure~\ref{Fig: MSS Setting Stratonovich}(a) such that Assumptions~\ref{Ass: M Lipschitz}, \ref{Ass: gamma}, \ref{Ass: w}, and \ref{Ass: Uncorrelated} are satisfied. As opposed to Figure~\ref{Fig: MSS Setting}, the multiplicative gains are now given a Stratonovich interpretation indicated by the symbol ``$\strat$" in the feedback block. 
Now we present a theorem that describes a conversion scheme of block diagrams from  Stratonovich to It\=o interpretations.
\begin{theorem} \label{Thm: Mean Square Equivalence}
	\textit{Under Assumptions~\ref{Ass: M Lipschitz}, \ref{Ass: gamma}, \ref{Ass: w}, and \ref{Ass: Uncorrelated}, the two block diagrams in Figures~\ref{Fig: MSS Setting Stratonovich}(a) and (b) are equivalent in the mean-square sense. That is, all the signals $du$, $y$, $dw$ and $dr$ in both block diagrams are equal in the mean-square sense.}
\end{theorem}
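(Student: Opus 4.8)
The plan is to reduce the claim to a single \emph{conversion identity} for the feedback integral and then invoke uniqueness of the It\=o SIE. Since the two diagrams share the same external input $dw$ and the same forward map $\mathcal M$, they can differ only through the feedback relation $dr = d\Gamma \diamond y$. Writing both as the Volterra SIE~(\ref{Eqn: Single SIE}), I would let $(y,r)$ denote the solution of the Stratonovich diagram (a) and show that this same pair satisfies the It\=o equations of diagram (b), where (b) carries the extra deterministic loop gain that the conversion identity produces. Because the It\=o version of~(\ref{Eqn: Single SIE}) has a unique solution under Assumption~\ref{Ass: M} (hence under the stronger Assumption~\ref{Ass: M Lipschitz}), matching the equations forces the solutions---and therefore all the signals $du,dr,y,dw$---to coincide in the mean-square sense.

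The heart of the argument is to evaluate, on a partition $\mathcal P_N[0,t]$, the difference between the Stratonovich and It\=o partial sums in~(\ref{Eqn: Ito Integrals Definition})--(\ref{Eqn: Stratonovich Integral Definition}). Since $\tfrac{y_k+y_{k+1}}{2}=y_k+\tfrac12\tilde y_k$,
\begin{equation*}
\sum_{k=0}^{N-1}\tilde\Gamma_k\,\tfrac{y_k+y_{k+1}}{2}-\sum_{k=0}^{N-1}\tilde\Gamma_k\,y_k=\tfrac12\sum_{k=0}^{N-1}\tilde\Gamma_k\,\tilde y_k ,
\end{equation*}
so the task is to compute $\lim_{N\to\infty}\tfrac12\sum_k\tilde\Gamma_k\tilde y_k$ in the mean-square sense. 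I would expand the increment $\tilde y_k$ from $y(t)=\int_0^t M(t-\tau)\,du(\tau)$ as a ``fresh'' contribution $\int_{t_k}^{t_{k+1}}M(t_{k+1}-\tau)\,du(\tau)$ plus a ``smooth-past'' remainder $\int_0^{t_k}[M(t_{k+1}-\tau)-M(t_k-\tau)]\,du(\tau)$, and further split $du=dw+dr$ inside the fresh term. Using Lipschitz continuity of $M$ (Assumption~\ref{Ass: M Lipschitz}) to replace $M(t_{k+1}-\tau)$ by $M(0)$ and $y(\tau)$ by $y_k$ over the short interval, the only piece that survives multiplication by $\tilde\Gamma_k$ is
\begin{equation*}
\tilde\Gamma_k\,\tilde y_k = \tilde\Gamma_k\,M(0)\,\tilde\Gamma_k\,y_k + (\text{mean-zero / higher-order terms}),
\end{equation*}
since the $dw$ part and the smooth-past remainder are independent of $\tilde\Gamma_k$ (by Assumption~\ref{Ass: Uncorrelated} and temporal independence of increments) while $\expec{\tilde\Gamma_k}=0$.

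The limit then follows from two steps. First, because $\tilde\Gamma_k=\mathcal D(\tilde\gvec_k)$ is independent of the past and $y_k$ is measurable with respect to it, an entrywise computation gives the conditional mean $\expec{\tilde\Gamma_k M(0)\tilde\Gamma_k \mid \mathcal F_{t_k}} = (M(0)\circ\cov{\Gamma})\,\Delta_k$, where the Hadamard product arises precisely because $\tilde\Gamma_k$ is diagonal with $\expec{(\tilde\gvec_k)_i(\tilde\gvec_k)_j}=(\cov{\Gamma})_{ij}\Delta_k$; summing yields the Riemann sum $\sum_k (M(0)\circ\cov{\Gamma})\,y_k\,\Delta_k\to\int_0^t (M(0)\circ\cov{\Gamma})\,y(\tau)\,d\tau$. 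Second, a law-of-large-numbers / martingale argument shows the fluctuation $\sum_k[\tilde\Gamma_k M(0)\tilde\Gamma_k-(M(0)\circ\cov{\Gamma})\Delta_k]\,y_k$ is a sum of conditionally mean-zero terms with variances $O(\Delta_k^2)$, hence vanishes in mean-square as $\pnorm\to0$. Together these give the conversion identity
\begin{equation*}
\int_0^t d\Gamma(\tau)\strat y(\tau)=\int_0^t d\Gamma(\tau)\ito y(\tau)+\tfrac12\int_0^t (M(0)\circ\cov{\Gamma})\,y(\tau)\,d\tau,
\end{equation*}
which identifies diagram (b) as the It\=o loop with the added deterministic gain $\tfrac12\,M(0)\circ\cov{\Gamma}$; substituting back and appealing to uniqueness of the It\=o SIE completes the proof. (Note the sanity check: if $\mathcal M$ is strictly proper then $M(0)=0$ and the two interpretations agree, as expected.)

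The main obstacle is making this mean-square limit rigorous rather than formal. It requires (i) an a priori bound showing $y$ is second order with $\cov{y}(\tau)$ bounded on $[0,t]$---plausibly via a bootstrap/Gronwall estimate on~(\ref{Eqn: Single SIE}) using Assumption~\ref{Ass: M Lipschitz} together with the variance propagation~(\ref{Eqn: IO Cov})---so that $y_k$ is controlled uniformly; (ii) careful variance bounds proving that the smooth-past remainder, the $dw$ cross terms, and the diagonal fluctuation each contribute $o(1)$ in mean-square, where the Lipschitz constant of $M$ controls $M(t_{k+1}-\tau)-M(0)=O(\Delta_k)$; and (iii) verifying that the resulting limit is independent of the partition sequence. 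The Lipschitz hypothesis (Assumption~\ref{Ass: M Lipschitz}), rather than the weaker class $\mathcal C$, is exactly what pushes these estimates through, explaining why it must replace Assumption~\ref{Ass: M} in the Stratonovich case.
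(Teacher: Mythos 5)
Your proposal is, in outline, the same argument the paper gives. The paper's Proposition~\ref{Proposition: Strato to Ito} is precisely your conversion identity (with the kernel $M(t-\tau)$ carried along inside the integral), and it is proved by the same partition computation you describe: the Stratonovich-minus-It\=o difference reduces to $\tfrac12\sum_k\tilde\Gamma_k\tilde y_k$ (the paper's $Q_N$), the increment $\tilde y_k$ is expanded through the SIE into a ``fresh'' part and a smooth-past part, the correction term is extracted via the entrywise identity $\tilde\Gamma_k M_0\tilde\Gamma_k-(M_0\circ\cov\Gamma)\Delta_k=\bigl(\tilde{\boldsymbol\gamma}_k\tilde{\boldsymbol\gamma}_k^*-\cov\Gamma\Delta_k\bigr)\circ M_0$, and the fluctuation is killed by mean-zero-plus-variance-$O(\Delta_k^2)$ estimates (the paper's $\lambda_N$). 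Your closing appeal to uniqueness of the It\=o SIE matches the paper's use of the Berger--Mizel existence/uniqueness theorem. So the route is the same; the paper simply does the full bookkeeping of all remainder terms in (\ref{Eqn: Sums}).

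There is, however, one genuine missing ingredient in your list of what must be made rigorous. Because the Stratonovich feedback is self-referential, expanding $\tilde y_k$ reproduces $\tilde\Gamma_k\tilde y_k$ inside itself: the fresh contribution of $dr$ over $[t_k,t_{k+1}]$ is approximately $M_0\tilde\Gamma_k\bigl(y_k+\tfrac12\tilde y_k\bigr)$, not $M_0\tilde\Gamma_k y_k$. The resulting terms---the paper's $\theta_N$ and $T_N^\alpha$ in (\ref{Eqn: Sums}), built from $\tilde\Gamma_k M_0\tilde\Gamma_k\tilde y_k$ and from smooth-past sums over $\tilde\Gamma_l\tilde y_l$---cannot be folded into your ``mean-zero / higher-order'' category by the independence argument you invoke, since $\tilde y_k$ is correlated with $\tilde\Gamma_k$. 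Controlling them in mean square (via Cauchy--Schwarz, the paper's Lemma~\ref{Lemma: Ineq Dep}) requires a bound on $\expec{\bigl(\sum_k\vnorm{\tilde y_k}^2\bigr)^2}$, i.e.\ on the second moment of the quadratic variation, $\expec{\langle y\rangle^2(t)}<\infty$. This is \emph{not} implied by the bound $\sup_{\tau\le t}\expec{\vnorm{y(\tau)}^2}<\infty$ that your item (i) asks for; it is exactly the second half of the paper's Proposition~\ref{Proposition: Bounded Moments}, proved in a dedicated argument in Appendix~\ref{Section: Bounded Quadratic Variation}. Without this a priori estimate, your step (ii) does not close on precisely these quadratic-variation-type terms.
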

\begin{figure}[h!]
	\centering
	\begin{subfigure}{0.24\textwidth}
		\includegraphics[scale=0.6]{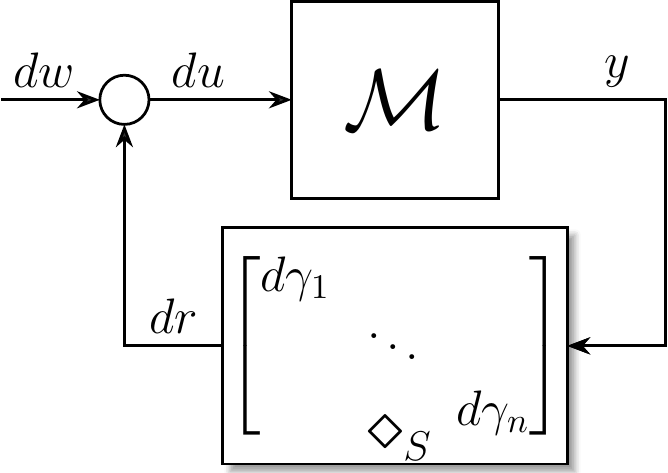} 
		\caption{\footnotesize{Stratonovich Interpretation}}
	\end{subfigure}
	\begin{subfigure}{0.24\textwidth}
		\includegraphics[scale=0.5]{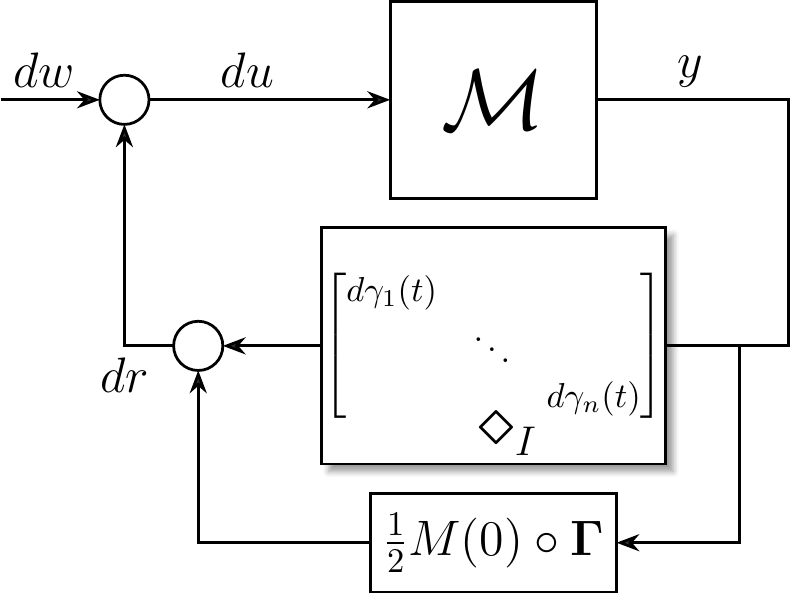} 
		\caption{\footnotesize{Equivalent It\=o Interpretation}}
	\end{subfigure}
	\caption{\footnotesize{(a) A continuous-time causal LTI system $\mathcal M$ in feedback with stochastic multiplicative gains $\{d\gamma_i\}$ that represent the differential forms of, possibly mutually correlated, Wiener processes. The diamond "$\strat$" in the feedback block indicates a Stratonovich interpretation. (b) The equivalent It\=o interpretation, in the mean-square sense, of the block diagram given in (a). The symbol ``$\circ$" denotes the Hadamard (element-by-element) product and ``$\ito$" indicates an It\=o interpretation of the multiplicative gains.}}
	\label{Fig: MSS Setting Stratonovich}
\end{figure}
The proof of Theorem~\ref{Thm: Mean Square Equivalence} is given in Section~\ref{Section: MS Equivalence}. A remark is worth noting here.
\begin{remark} \label{Remark: Relative Degree}
	If $M(0) = 0$, the block diagrams in Figures~\ref{Fig: MSS Setting Stratonovich} (a) and (b) become identical. This means that there is no difference between It\=o and Stratonovich interpretations if the impulse response is zero at initial time. This sort of reintroduces a notion of ``strict causality" that forces the Stratonovich interpretation to behave in the same way as that of It\=o. Therefore, LTI systems $\mathcal M$ with relative degrees \footnote{The relative degree of an LTI system with impulse response $M$ is defined as the largest positive integer $p$ such that $\lim_{s\to\infty} s^pM(s) < \infty$.} $\geq 2$ have the same MSS conditions for both It\=o and Stratonovich interpretations.
\end{remark}

\subsection{Mean-Square Stability Conditions} \label{Section: MSS Conditions}
The MSS setting considered in this paper is given in Figure~\ref{Fig: MSS Setting} and is repeated here in Figure~\ref{Fig: MSS Setting Two Interpretations} to explicitly show the adopted stochastic interpretation of the feedback block. In this section, MSS conditions are given in terms of a linear operator, denoted by $\mathbb L$, that acts on a positive semidefinite matrix to produce another positive semidefinite matrix.
\begin{figure}[h!]
	\centering
	\includegraphics[scale=0.6]{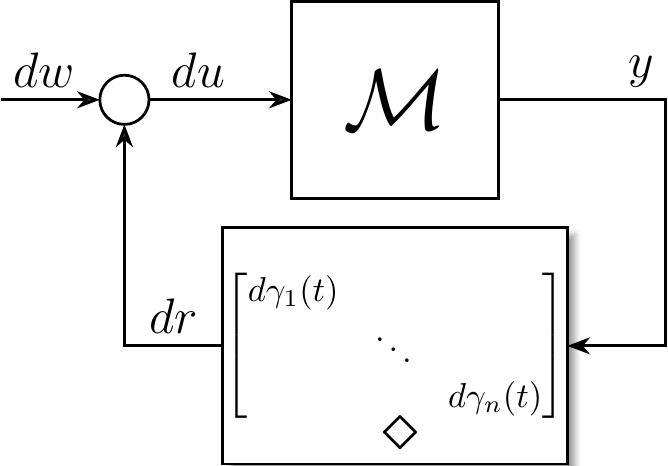} 
	\caption{\footnotesize{Mean-square stability setting. This figure is similar to the general setting given Figure~\ref{Fig: MSS Setting}. The only difference is that the stochastic interpretation of the feedback block is encoded by the symbol ``$\diamond$" such that $\diamond = \ito$ denotes an It\=o interpretation, whereas $\diamond = \strat$ denotes a Stratonovich interpretation. }}
	\label{Fig: MSS Setting Two Interpretations}
\end{figure}
Its role is to propagate the steady-state covariance (if it exists) of $du$, denoted by $\sscov U$, through the loop to yield that of $dr$, denoted by $\sscov R$. This ``Loop Gain Operator" (LGO) is the continuous-time counterpart of that defined in \cite{bamieh2018structured} for the discrete-time setting. For the It\=o setting (i.e. $\diamond = \ito$ in Figure~\ref{Fig: MSS Setting Two Interpretations}), the LGO is denoted by $\mathbb L_I$ and is given by
\begin{align} \label{Eqn: Ito LGO}
	\sscov R = \mathbb L_I\left(\sscov U\right) &:= \cov \Gamma \circ \left(\int_0^\infty M(\tau) \sscov U M^*(\tau) d\tau\right).
\end{align}
Refer to Section~\ref{Section: LGO and MSS} for a detailed derivation of the LGO. A key step in the derivation of $\mathbb L_I$ is showing that $du$ is temporally independent which is required to propagate $\sscov U$ in the forward block $\mathcal M$ using (\ref{Eqn: SS IO COv}). As will be shown in Section~\ref{Section: Stochastic Block Diagram Interpretation}, this temporal independence is a consequence of (1) the causality of $\mathcal M$, (2) the temporal independence of the stochastic multiplicative gains, and (3) the It\=o interpretation. However, for the Stratonovich setting (i.e. $\diamond = \strat$ in Figure~\ref{Fig: MSS Setting Two Interpretations}), $du$ is not temporally independent. This is a consequence of the nature of the Stratonovich integral in (\ref{Eqn: Stratonovich Integral Definition}) that ``looks into the future". In this case, (\ref{Eqn: SS IO COv}) cannot be used to propagate the covariance in the forward block of Figure~\ref{Fig: MSS Setting Stratonovich}(a). Nonetheless, one can exploit the block diagram conversion scheme in Section~\ref{Section: Ito Stratonovich Conversion} and rearrange the block diagram in Figure~\ref{Fig: MSS Setting Stratonovich}(b) so that it looks like the It\=o setting as depicted in Figure~\ref{Fig: Rearranged Ito Equivalent}. 
The equivalent forward block, now denoted by $\mathcal H$, is still a causal LTI system whose transfer function is 
\begin{align} \label{Eqn: Transfer Functions}
	H(s) = \left(I - M(s) G \right) ^{-1} M(s),
\end{align}
where $G := \frac{1}{2} M(0) \circ \cov \Gamma$ and $M(s)$ is the transfer function of $\mathcal M$. 
\begin{figure}[h!]
	\centering
	\includegraphics[scale=0.6]{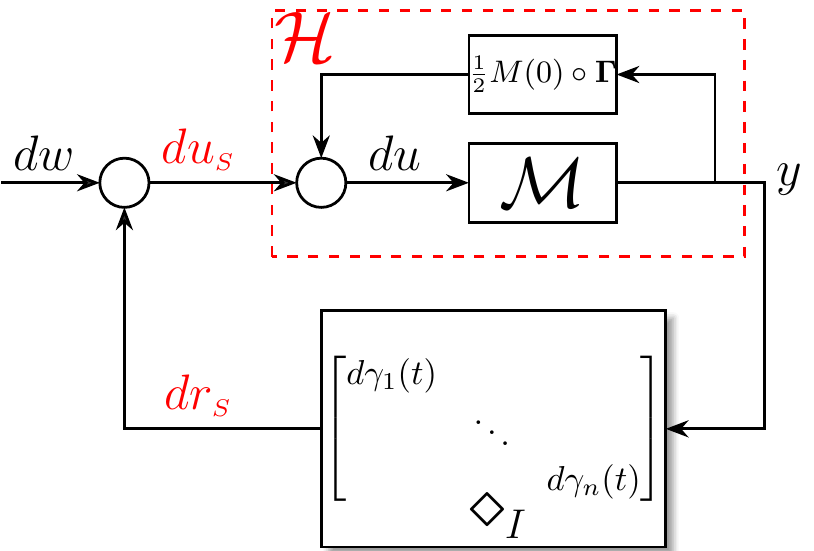} 
	\caption{\footnotesize{Rearrangement of the block diagram in Figure~\ref{Fig: MSS Setting Stratonovich}(b) }}
	\label{Fig: Rearranged Ito Equivalent}
\end{figure}
The input differential signal $du_\mathsmaller{S}$ in Figure~\ref{Fig: Rearranged Ito Equivalent} is now temporally independent and thus (\ref{Eqn: SS IO COv}) can be exploited to propagate the steady state covariance through the equivalent forward block $\mathcal H$.
Thus, the LGO for the Stratonovich setting propagates the steady-state covariance (if it exists) of $du_{\mathsmaller S}$, denoted by $\sscov U_{\mathsmaller S}$, through the loop of Figure~\ref{Fig: Rearranged Ito Equivalent} to yield that of $dr_{\mathsmaller S}$, denoted by $\sscov R_{\mathsmaller S}$. It is now denoted by $\mathbb L_S$ and is given by
\begin{align} \label{Eqn: Stratonovich LGO}
	\sscov R_{\mathsmaller S} = \mathbb L_S\left(\sscov U_{\mathsmaller S}\right) &:= \cov \Gamma \circ \left(\int_0^\infty H(\tau) \sscov U_{\mathsmaller S} H^*(\tau) d\tau\right),
\end{align}
where $H$ is given in (\ref{Eqn: Transfer Functions}).
The spectral radius of $\mathbb L$ completely characterizes the MSS condition as will be seen next.
\begin{theorem} \label{Thm: MSS Conditions}
\textit{Consider the system in Figure~\ref{Fig: MSS Setting Two Interpretations} such that Assumptions~\ref{Ass: M}-\ref{Ass: Uncorrelated} are satisfied. The feedback system is MSS if and only if the two conditions are satisfied
\begin{enumerate}
	\item The equivalent forward block in Figure~\ref{Fig: MSS Setting Two Interpretations} has a finite $H^2-\text{norm}$.
	\item The spectral radius of the loop gain operator is strictly less than 1, i.e. $\rho(\mathbb L) <1$.
\end{enumerate}
where 
\begin{itemize}
	\item For the It\=o interpretation, the equivalent forward block is $\mathcal M$, and $\mathbb L$ is given in (\ref{Eqn: Ito LGO}).
	\item For the Stratonovich interpretation, the equivalent forward block is $\mathcal H$, whose transfer function is given in (\ref{Eqn: Transfer Functions}), $\mathbb L$ is given in (\ref{Eqn: Stratonovich LGO}), and Assumption~\ref{Ass: M} is replaced by Assumption~\ref{Ass: M Lipschitz}.
\end{itemize}}
\end{theorem}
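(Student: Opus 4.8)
\textit{Proof plan.} The plan is to prove the It\=o statement first and then deduce the Stratonovich statement as a corollary of the conversion already established. By Theorem~\ref{Thm: Mean Square Equivalence} and the subsequent rearrangement into Figure~\ref{Fig: Rearranged Ito Equivalent}, the Stratonovich loop is mean-square equivalent to an It\=o-type loop whose forward block is $\mathcal H$ (with transfer function (\ref{Eqn: Transfer Functions})) and whose input differential $du_{\mathsmaller S}$ is temporally independent. Hence every step carried out below for a generic causal forward block with temporally independent input applies verbatim to the Stratonovich case after substituting $\mathcal H$ for $\mathcal M$ and $\mathbb L_S$ for $\mathbb L_I$, and after noting that well-posedness of $\mathcal H$ (finiteness of its $H^2$-norm, which presupposes invertibility of $I - M(s)G$) is precisely condition (1) in that setting.

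First I would reduce MSS to a single deterministic Volterra equation for the instantaneous covariance. Because $du$ is temporally independent in the It\=o interpretation (a consequence of causality of $\mathcal M$, temporal independence of the gains, and the It\=o rule, as noted in Section~\ref{Section: MSS Conditions}), the deterministic input/output identity (\ref{Eqn: IO Cov}) applies, giving $\cov Y(t) = \int_0^t M(t-\tau)\,\cov U(\tau)\,M^*(t-\tau)\,d\tau$. Writing $du = dw + dr$, invoking Assumption~\ref{Ass: Uncorrelated} to annihilate the cross term $\expec{dw\,dr^*}$, and using $dr = d\Gamma \ito y$ with $\cov R(t) = \cov \Gamma \circ \cov Y(t)$, I obtain the closed relation $\cov U(t) = \cov W(t) + \cov \Gamma \circ \int_0^t M(t-\tau)\,\cov U(\tau)\,M^*(t-\tau)\,d\tau$. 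This is a linear Volterra integral equation whose kernel is the finite-horizon version of the loop gain operator $\mathbb L_I$ of (\ref{Eqn: Ito LGO}); MSS is exactly the statement that its solution $\cov U$ (equivalently $\cov R,\cov Y$) is uniformly bounded in $t$.

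For sufficiency I would solve this equation by Picard iteration. The map $X \mapsto \cov \Gamma \circ \int_0^\infty M X M^*\,d\tau$ is a positive operator: it sends the cone of positive semidefinite matrices into itself, since a congruence followed by a Hadamard product with the PSD matrix $\cov \Gamma$ preserves positive semidefiniteness by the Schur product theorem. Combined with the monotonicity of $\cov W$ (Assumption~\ref{Ass: w}), this makes the iterates monotone nondecreasing both in the iteration index and in $t$, so convergence is equivalent to uniform boundedness. Passing to the infinite horizon, the monotone limit equals the Neumann series $\sum_{k\ge 0}\mathbb L_I^{\,k}\big(\sscov W\big)$, each term of which is finite because the finite $H^2$-norm of $\mathcal M$ makes $\int_0^\infty M M^*\,d\tau$ converge. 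This series converges iff $\rho(\mathbb L_I)<1$, yielding the bound $\max\{\vnorm{\cov u}_\infty,\vnorm{\cov r}_\infty,\vnorm{\cov y}_\infty\} \le c\,\vnorm{\cov w}_\infty$ required for MSS.

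For necessity, if the $H^2$-norm is infinite then the open-loop case $\cov \Gamma = 0$ (so $\mathbb L_I = 0$ and $du = dw$) already violates Definition~\ref{Def: IO MSS}, contradicting MSS; hence condition (1) is necessary. The harder direction is $\rho(\mathbb L_I)\ge 1 \Rightarrow$ not MSS, and this is where I expect the main obstacle. Since $\mathbb L_I$ is a positive operator on the finite-dimensional cone of PSD matrices, a Perron--Frobenius / Krein--Rutman argument should give that $\rho(\mathbb L_I)$ is itself an eigenvalue with a PSD eigenvector $X \succeq 0$. Choosing the additive disturbance so that $\sscov W$ has a nonzero component along $X$ forces the iterates to grow at least like $\rho(\mathbb L_I)^k \ge 1$, so $\cov u(t)$ is unbounded and MSS fails. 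The two delicate points I anticipate are (i) establishing this cone Perron--Frobenius structure and the PSD eigenvector rigorously, and (ii) transferring the infinite-horizon spectral-radius criterion back to a statement about the \emph{finite-horizon} transient covariances, for which I would lean on the monotonicity in $t$ furnished by Assumption~\ref{Ass: w} together with the positivity of the kernel to interchange the $t\to\infty$ limit with the series summation.
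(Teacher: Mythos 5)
Your overall architecture---reducing MSS to the deterministic covariance Volterra equation $\cov U(t) = \cov W(t) + \cov \Gamma \circ \int_0^t M(t-\tau)\cov U(\tau)M^*(t-\tau)\,d\tau$, treating Stratonovich by the conversion of Theorem~\ref{Thm: Mean Square Equivalence} and the rearrangement into Figure~\ref{Fig: Rearranged Ito Equivalent}, and proving sufficiency via positivity/monotonicity of the loop gain operator---matches the paper; your Neumann series is the same mechanism the paper invokes when it writes $\cov U(t) \leq (\mathbb I - \mathbb L)^{-1}\sscov W$ and cites \cite{bamieh2018structured} for existence and monotonicity of the inverse. The genuine problems are in the necessity direction. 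First, to show condition (1) is necessary you ``consider the open-loop case $\cov \Gamma = 0$,'' but $\cov \Gamma$ is fixed system data (Assumption~\ref{Ass: gamma}); you must exhibit failure of MSS for the given loop, not for a different one. The paper's fix is positivity: since $\cov \Gamma \circ \cov Y(\tau) \geq 0$ by the Schur product theorem, the closed-loop covariance dominates the open-loop response,
\begin{equation*}
\cov Y(t) \;\geq\; \int_0^t M(t-\tau)\, \cov W(\tau)\, M^*(t-\tau)\, d\tau ,
\end{equation*}
which grows unboundedly when $\mathcal M$ has infinite $H^2$-norm (take $\cov W = I$). As written, your argument proves nothing when $\cov \Gamma \neq 0$.

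Second, for necessity of condition (2), your two ``delicate points'' are not equally delicate: point (i), the PSD Perron--Frobenius eigen-matrix $\ecov U$ with $\mathbb L(\ecov U) = \rho(\mathbb L)\ecov U$, is available off the shelf (Remark~\ref{Remark: Perron Frobenius}, citing \cite{bamieh2018structured}); but point (ii) is exactly where the paper's proof does its real work, and your plan to ``interchange the $t\to\infty$ limit with the series summation'' is never carried out. The paper samples time at $t_k = kT$, uses monotonicity of $\cov U$ to derive $\cov U(t_k) \geq \mathbb L_T\big(\cov U(t_{k-1})\big) + \cov W(t_k)$ with the \emph{truncated} operator $\mathbb L_T$ of (\ref{Eqn: Truncated LGO}), and then---because $\ecov U$ is an eigen-matrix of $\mathbb L$ but not of $\mathbb L_T$---invokes a quantitative perturbation bound $\mathbb L_T(\ecov U) \geq \big(\rho(\mathbb L) - \epsilon c\big)\ecov U$ for $T$ large enough to drive the iteration; without some such finite-horizon surrogate your eigenvector argument never touches the transient covariances. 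Relatedly, your growth claim ``at least like $\rho(\mathbb L_I)^k \geq 1$'' is vacuous at the boundary case $\rho(\mathbb L) = 1$: a sequence bounded below by $1$ need not diverge. The paper closes that case separately: with $\cov W(t_k) = \ecov U$, the accumulated driving terms yield $\sscov U \geq \epsilon^{-1}\ecov U$ for every $\epsilon > 0$, so no finite constant $c$ in the MSS definition can exist. Both repairs are needed before the necessity half of your proof stands.
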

The proof of Theorem~\ref{Thm: MSS Conditions} is given in Section~\ref{Section: LGO and MSS}.
Observe that, under the It\=o interpretations, the covariance matrix $\cov \Gamma$ only plays a role in the second condition. However, under the Stratonovich interpretation, $\cov \Gamma$ plays a role in both conditions since the equivalent forward block $\mathcal H$ now depends on $\cov \Gamma$ (Figure~\ref{Fig: Rearranged Ito Equivalent}). Therefore, the conditions of MSS can be very different when different stochastic interpretations are adopted. We close this section by noting that the spectral radius of $\mathbb L$ can be numerically calculated using the power iteration explained in \cite{bamieh2018structured}. 

\section{Application to State Space Realizations \& SDEs} \label{Section: SS}
In this section, we consider the mean-square stability problems for both the It\=o and Stratonovich settings given in Figure~\ref{Fig: MSS Setting Two Interpretations}, but for the special case when $\mathcal M$ is given a state space realization. Thus, the underlying equations can be written as SDEs, i.e.
\begin{align} \label{Eqn: SS Realization}
dx(t) &= Ax(t) dt + Bdu(t); \qquad
y(t) = Cx(t) \nonumber\\
du(t) &= dw(t) + dr(t) \nonumber \\
dr(t) &= d\Gamma(t) \diamond y(t) \quad\text{for} \quad \diamond = \{\ito, \strat\},
\end{align}
where the last equation refers to either an It\=o or Stratonovich interpretation. The impulse response of $\mathcal M$ can thus be written as $M(t) = C e^{At} B$. Then, the realization of the loop gain operator, for each interpretation, can be calculated using (\ref{Eqn: Ito LGO}) and (\ref{Eqn: Stratonovich LGO}). Starting with the It\=o interpretation, we have
\begin{align*}
\sscov R &= \mathbb L_I(\sscov U) := \cov \Gamma \circ \left(\int_0^\infty M(\tau) \sscov U M^*(\tau) d\tau \right)\\
&= \cov \Gamma \circ \left(C \int_0^\infty e^{A\tau} B \sscov U B^* e^{A^* \tau} d\tau \right) C \\
&= \cov \Gamma \circ \left (C \sscov X C\right), 
\end{align*}
where $\sscov X := \int_0^\infty e^{A\tau} B\sscov U B^* e^{A^*\tau} d\tau$ which satisfies the algebraic Lyapunov equation given by
$$ A \sscov X + \sscov X A^* + B \sscov U B^* = 0.$$

For the Stratonovich interpretation, we use Figure~\ref{Fig: Rearranged Ito Equivalent} to give the equivalent It\=o representation. The impulse response of $\mathcal H$ in Figure~\ref{Fig: MSS Setting Stratonovich}(b) can be  shown to be $H(t)=Ce^{A_S t}$ with $A_S = A + 1/2 B \big((CB) \circ \cov \Gamma \big) C$ and the LGO can be similarly given a realization. To summarize, let $\mathbb L_I$ and $\mathbb L_S$ denote the loop gain operators for the It\=o and Stratonovich interpretations as given in (\ref{Eqn: Ito LGO}) and (\ref{Eqn: Stratonovich LGO}), respectively. Then their state space realizations are given by
\begin{align} \label{Eqn: SS LGO}
&\begin{aligned}
\sscov R &= \mathbb L_k(\sscov U) \\
(k &= I, S)
\end{aligned}
& \Leftrightarrow &
&\left\{ 
\begin{aligned} 
\sscov R &= \cov \Gamma \circ \left(C\sscov X C^*\right) \\
0 &= A_k \sscov X + \sscov X A_k^* + B \sscov U B^*; 
\end{aligned} \right. 
\end{align}
where $A_I := A$ and $A_S:= A + \frac{1}{2} B\big((CB) \circ \cov \Gamma\big) C$.
Therefore, as a direct application of Theorem~\ref{Thm: MSS Conditions}, the necessary and sufficient conditions of MSS are (1) $A_k$ is Hurwitz and (2) $\rho(\mathbb L_k) < 1$ for $k = I,S$ for It\=o and Stratonovich interpretations, respectively.

\section{Stochastic Block Diagram Conversion Technique} \label{Section: MS Equivalence}
In this section, we provide a proof for Theorem~\ref{Thm: Mean Square Equivalence}. Consider the Stratonovich setting in Figure~\ref{Fig: MSS Setting Stratonovich}(a) such that Assumptions~\ref{Ass: M Lipschitz}, \ref{Ass: gamma}, \ref{Ass: w}, and \ref{Ass: Uncorrelated} are satisfied. The block diagram can be described by a single SIE given in (\ref{Eqn: Single SIE}) with $\diamond = \strat$, and the goal of this section is to show that it is equivalent (in the mean-square sense) to
\begin{align} \label{Eqn: Single SIE Ito}
	y(t) &= \int_0^t M(t-\tau) dw(\tau) + \int_0^t M(t-\tau) \ito d\Gamma(\tau) y(\tau) \nonumber \\
	&\qquad \qquad \qquad + \frac{1}{2} \int_0^t M(t-\tau) \big(M_0 \circ \cov \Gamma \big) y(\tau) d\tau,
\end{align}
where $M(0)$ is denoted by $M_0$ for notational convenience. This can be shown by exploiting the following two propositions.
\begin{proposition} \label{Proposition: Bounded Moments}
	\textit{Consider the SIE given in (\ref{Eqn: Single SIE Ito}) (or equivalently (\ref{Eqn: Single SIE}) with $\diamond = \strat$) such that Assumptions~\ref{Ass: M Lipschitz}, \ref{Ass: gamma}, \ref{Ass: w}, and \ref{Ass: Uncorrelated} are satisfied. Then the second moments of $y$ and its quadratic variation (Section~\ref{Section: Definition QV}) are both finite over finite intervals. That is, there exist two scalar continuous functions $c_y$ and $c_q$ such that
	\begin{equation} \label{Eqn: Bounded Moments}
		\sup_{0\leq \tau \leq t} \expec{\vnorm{y(\tau)}^2} = c_y(t); 
		\sup_{0\leq \tau \leq t} \expec{\langle y \rangle^2(\tau)} = c_q(t).
	\end{equation} }
	The proof of the boundedness of $\expec{\vnorm{y(\tau)}^2}$ is given in \cite[Thm 5A]{berger1979theorems} while that of the quadratic variation is given in Section~\ref{Section: Bounded Quadratic Variation}. These bounds will be useful to prove Proposition~\ref{Proposition: Strato to Ito}.
\end{proposition}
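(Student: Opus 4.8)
The statement splits into two bounds, and since the bound on $\expec{\vnorm{y(\tau)}^2}$ is inherited verbatim from \cite[Thm 5A]{berger1979theorems}, my plan concerns only the quadratic-variation bound deferred to Section~\ref{Section: Bounded Quadratic Variation}. The strategy is to work on a partition $\mathcal P_N[0,t]$, estimate the partial sums $\sum_k \vnorm{\tilde y_k}^2$ that define $\langle y \rangle(t)$ (Section~\ref{Section: Definition QV}) \emph{uniformly} in the partition, and then pass to the limit $N\to\infty$. First I would use the SIE~\eqref{Eqn: Single SIE Ito} to write each increment $\tilde y_k = y(t_{k+1})-y(t_k)$ as a \emph{local} contribution, namely the three integrals of~\eqref{Eqn: Single SIE Ito} restricted to $[t_k,t_{k+1}]$, plus a \emph{memory} contribution coming from the kernel shift $M(t_{k+1}-\tau)-M(t_k-\tau)$ integrated over $[0,t_k]$.

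The first key step is to show that the memory contribution is negligible for the quadratic variation, and this is exactly where Assumption~\ref{Ass: M Lipschitz} enters: Lipschitz continuity gives $\mnorm{M(t_{k+1}-\tau)-M(t_k-\tau)} \leq L\,\Delta_k$, so the memory increments have second moments of order $\Delta_k^2$; after squaring and summing over $k$ their total contribution to the quadratic-variation sum is $O(\cov \Delta)$ and vanishes, with the cross terms against the local part handled by Cauchy--Schwarz to give $O(\cov \Delta^{1/2})$. What survives is the local part, whose kernel tends to $M_0$ as $\tau\to t_{k+1}$, so that $\tilde y_k$ equals, up to negligible terms, $M_0\big(\tilde w_k + \tilde\Gamma_k y_k\big)$ plus a drift term which is itself $O(\Delta_k)$ and hence contributes nothing to the quadratic variation.

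It then remains to bound the second moment of the surviving sum uniformly in the partition, so I would estimate $\expec{\big(\sum_k \vnorm{\tilde y_k}^2\big)^2} = \sum_{k,l}\expec{\vnorm{\tilde y_k}^2 \vnorm{\tilde y_l}^2}$ by Cauchy--Schwarz in terms of the fourth moments $\expec{\vnorm{\tilde y_k}^4}$. Expanding $\tilde y_k \approx M_0(\tilde w_k + \tilde\Gamma_k y_k)$, the It\=o interpretation makes $\tilde\Gamma_k$ independent of the past $y_k$ and Assumption~\ref{Ass: Uncorrelated} decouples $\tilde w_k$ from $\tilde\Gamma_k$, so the Gaussian increment moment formulas implied by Assumptions~\ref{Ass: gamma}--\ref{Ass: w} give $\expec{\vnorm{\tilde w_k}^4}=O(\Delta_k^2)$ and $\expec{\vnorm{\tilde\Gamma_k y_k}^4 \mid y_k}=O(\Delta_k^2)\vnorm{y_k}^4$. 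This yields $\expec{\vnorm{\tilde y_k}^4}=O(\Delta_k^2)\big(1+\expec{\vnorm{y_k}^4}\big)$, and summing the resulting $O(\Delta_k\Delta_l)$ bounds over the double index produces $\sum_{k,l}O(\Delta_k\Delta_l)=O(t^2)$, which is uniform in the partition and continuous in $t$; the limit then inherits this bound and furnishes the required continuous function $c_q$.

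The main obstacle is the appearance of the fourth moment $\expec{\vnorm{y(\tau)}^4}$, which the cited second-moment bound does not supply. I expect to close this gap by separately establishing $\sup_{\tau\leq t}\expec{\vnorm{y(\tau)}^4}<\infty$ via a Gr\"onwall argument applied to~\eqref{Eqn: Single SIE Ito}: raising the equation to the fourth power, controlling the stochastic integrals with the It\=o isometry and a Burkholder--Davis--Gundy estimate, and absorbing the resulting $\int_0^t \expec{\vnorm{y(\sigma)}^4}\,d\sigma$ term by Gr\"onwall's lemma, where the linearity of the SIE in $y$ together with the boundedness of $M$ over finite intervals from Assumption~\ref{Ass: M Lipschitz} makes the growth linear and hence all moments finite. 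The secondary difficulty is bookkeeping: the cross terms $\expec{\vnorm{\tilde y_k}^2\vnorm{\tilde y_l}^2}$ with $k\neq l$ must be shown to remain summable once the memory contributions are discarded, which requires carefully tracking that every discarded term carries an extra power of $\cov \Delta$.
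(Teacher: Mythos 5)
Your proposal is correct and follows essentially the same route as the paper's argument in Appendix~\ref{Section: Bounded Quadratic Variation}: decompose each increment $\tilde y_k$ via the SIE into a local term plus a memory term driven by the kernel shift, use Lipschitz continuity of $M$ (Assumption~\ref{Ass: M Lipschitz}) to make the memory term's fourth moment $O(\cov\Delta^2)$, conclude $\expec{\vnorm{\tilde y_k}^4} = O(\cov\Delta^2)$, and then control $\expec{\bigl(\sum_k \vnorm{\tilde y_k}^2\bigr)^2}$ by Cauchy--Schwarz over the double sum, uniformly in the partition. The one substantive place you diverge is the fourth moment $\sup_{\tau\leq t}\expec{\vnorm{y(\tau)}^4}$: the paper does not prove it at all, but simply cites the corollary of \cite[Thm 3.1]{ito1979existence}, whereas you propose a self-contained Gr\"onwall argument with It\=o isometry and Burkholder--Davis--Gundy. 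Your route buys independence from the literature, but note one technical caveat: the stochastic integral in (\ref{Eqn: Single SIE Ito}) is of Volterra type (the kernel $M(t-\tau)$ depends on the upper limit $t$), so it is not a martingale in $t$ and BDG cannot be applied to it directly; you must fix $t$, apply BDG to the genuine martingale $s \mapsto \int_0^s M(t-\tau) \ito d\Gamma(\tau)\, y(\tau)$ on $[0,t]$, and evaluate at $s=t$ before invoking Gr\"onwall --- this is in fact how such moment bounds are obtained in the cited reference. Two minor differences: the paper carries out the computation only in a simplified scalar setting ($w=0$, $M_0=0$, $\cov\Gamma=1$) and remarks that the general case is analogous, while you outline the general vector case with the additive noise and drift terms included; and the paper does not argue that the memory contribution ``vanishes'' in the limit --- it merely bounds it by the same $O(\cov\Delta^2)$ order as the local term, which suffices since only finiteness, not identification of the limit, is needed here.
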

\begin{proposition} \label{Proposition: Strato to Ito}
	\textit{Consider the  Stratonovich integral 
	$$S(t) := \int_0^t M(t-\tau) d\Gamma(\tau) \strat  y(\tau), $$
	where $M$ satisfies Assumption~\ref{Ass: M}, $d\Gamma(t)$ is defined in (\ref{Eqn: dGamma}) such that $\gvec$ satisfies Assumption~\ref{Ass: gamma}, and $y$ is a stochastic process that satisfies (\ref{Eqn: Single SIE}) with $\diamond = \strat$. Then $S(t) = I(t) + \frac{1}{2} R(t)$ in the mean-square sense, where
	\begin{equation*}
		\begin{aligned}
			I(t) &:= \int_0^t M(t-\tau)  \ito d\Gamma(\tau) y(\tau) \qquad \text{and}\\
			R(t) &:=  \int_0^t M(t-\tau)\big(M_0 \circ \cov \Gamma \big) y(\tau) d\tau
		\end{aligned}
	\end{equation*}
	are It\=o and Riemann integrals, respectively.}
\end{proposition}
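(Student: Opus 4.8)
The plan is to work directly from the partial-sum definitions of the It\=o and Stratonovich convolution integrals in Section~\ref{Section: Interpretations} and to show that their difference converges, in the mean-square sense, to $\frac{1}{2}R(t)$. On a partition $\mathcal P_N[0,t]$ I would write the Stratonovich sum as $S_N(t) = \sum_{k=0}^{N-1} M(t-t_k)\tilde\Gamma_k \frac{y_k + y_{k+1}}{2}$ and the It\=o sum as $I_N(t) = \sum_{k=0}^{N-1} M(t-t_k)\tilde\Gamma_k y_k$; since $\strat$ and $\ito$ differ only through the midpoint versus left-endpoint evaluation of the integrand (the deterministic kernel being evaluated identically in both), the two sums differ only by the ``quadratic-covariation'' term
$$S_N(t) - I_N(t) = \frac{1}{2}\sum_{k=0}^{N-1} M(t-t_k)\,\tilde\Gamma_k\,\tilde y_k.$$
Everything then reduces to identifying the mean-square limit of this correction.

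First I would isolate the part of the increment $\tilde y_k$ that is correlated with the forward increment $\tilde\Gamma_k$. Substituting the SIE (\ref{Eqn: Single SIE}) with $\diamond = \strat$ for $y(t_{k+1})$ and $y(t_k)$ and splitting each convolution at $t_k$, the increment $\tilde y_k$ decomposes into (i) the fresh feedback contribution $\int_{t_k}^{t_{k+1}} M(t_{k+1}-\tau)\strat d\Gamma(\tau)y(\tau)$, (ii) the fresh additive contribution $\int_{t_k}^{t_{k+1}} M(t_{k+1}-\tau)dw(\tau)$, and (iii) the two ``kernel-shift'' terms over $[0,t_k]$ produced by replacing the argument $t_k-\tau$ with $t_{k+1}-\tau$. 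By Assumption~\ref{Ass: gamma} (independent increments of $\gvec$) and Assumption~\ref{Ass: Uncorrelated} ($w\perp\gvec$), the forward increment $\tilde\Gamma_k$ is independent of everything built from the past up to $t_k$ and from the fresh $dw$, so only term (i) can correlate with $\tilde\Gamma_k$. Using continuity of $M$ at the origin to replace $M(t_{k+1}-\tau)$ by $M_0 := M(0)$ on the short interval, and mean-square continuity of $y$ to replace $y(\tau)$ by $y_k$, term (i) equals $M_0\tilde\Gamma_k y_k$ up to a higher-order remainder, so that $\tilde\Gamma_k\tilde y_k = \tilde\Gamma_k M_0\tilde\Gamma_k y_k + (\text{uncorrelated / higher order})$.

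Next I would pass to the limit. The uncorrelated pieces, once multiplied by $\tilde\Gamma_k$ and summed over $k$, form sums of (conditionally) mean-zero terms whose individual variances are $O(\Delta_k^2)$ or smaller; hence their total mean-square size is $O(\cov\Delta)\to 0$, where the uniform second-moment and finite quadratic-variation bounds on $y$ needed to control these estimates are exactly those furnished by Proposition~\ref{Proposition: Bounded Moments}. For the surviving term I would use the entrywise computation $\expec{\tilde\Gamma_k M_0\tilde\Gamma_k} = (M_0\circ\cov\Gamma)\Delta_k$, which follows from $\expec{\tilde\gamma_{i,k}\tilde\gamma_{j,k}} = (\cov\Gamma)_{ij}\Delta_k$ (the $(i,j)$ entry of $\expec{d\gvec\,d\gvec^*}=\cov\Gamma\,dt$) and is precisely where the Hadamard product $M_0\circ\cov\Gamma$ enters. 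Replacing the random matrix $\tilde\Gamma_k M_0\tilde\Gamma_k$ by its mean inside the sum and invoking mean-square continuity of $\tau\mapsto M(t-\tau)y(\tau)$ yields
$$\tfrac{1}{2}\sum_{k=0}^{N-1} M(t-t_k)(M_0\circ\cov\Gamma)\,y_k\,\Delta_k \;\longrightarrow\; \tfrac{1}{2}\int_0^t M(t-\tau)(M_0\circ\cov\Gamma)\,y(\tau)\,d\tau = \tfrac{1}{2}R(t),$$
which is the claim.

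The main obstacle is the replacement of the random quadratic form $\tilde\Gamma_k M_0\tilde\Gamma_k$ by its expectation $(M_0\circ\cov\Gamma)\Delta_k$ uniformly along the partition: this is the continuous-time analogue of the law-of-large-numbers step underlying every It\=o/Stratonovich conversion, and its rigorous mean-square control (bounding the fluctuations $\tilde\Gamma_k M_0\tilde\Gamma_k - \expec{\tilde\Gamma_k M_0\tilde\Gamma_k}$ tested against the path $y$) is exactly what forces us to have in hand the finite second-moment and finite quadratic-variation bounds of Proposition~\ref{Proposition: Bounded Moments}. A secondary technical point is confirming that the past-measurable quantity $y_k$ is uncorrelated with the future increment $\tilde\Gamma_k$ even under the Stratonovich convention; the anticipation of $\strat$ is confined to the subinterval $[t_k,t_{k+1}]$ and does not reach back to $y(t_k)$, so independence of the increments of $\gvec$ still applies and the decomposition above is legitimate.
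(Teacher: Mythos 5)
Your proposal follows, in substance, the same route as the paper's proof: work with partial sums on a partition $\mathcal P_N[0,t]$, identify the It\=o--Stratonovich correction as a quadratic-covariation-type sum $\tfrac12\sum_k M(\cdot)\tilde\Gamma_k\tilde y_k$, substitute the SIE (\ref{Eqn: Single SIE}) for the increment $\tilde y_k$, isolate the only piece correlated with the forward increment $\tilde\Gamma_k$ (the fresh feedback contribution $\approx M_0\tilde\Gamma_k y_k$), and then replace $\tilde\Gamma_k M_0\tilde\Gamma_k$ by its expectation $(M_0\circ\cov\Gamma)\Delta_k$ while controlling all remainders with the moment and quadratic-variation bounds of Proposition~\ref{Proposition: Bounded Moments} and the independence of $\tilde\Gamma_k$ from $\{y_l\}_{l\leq k}$. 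This matches the paper's structure: your law-of-large-numbers step is handled there through the exact identity $\tilde\Gamma_k M_0\tilde\Gamma_k-(M_0\circ\cov\Gamma)\Delta_k=\big(\tilde{\boldsymbol\gamma}_k\tilde{\boldsymbol\gamma}_k^*-\cov\Gamma\Delta_k\big)\circ M_0$ and the term $\lambda_N$ in (\ref{Eqn: Sums}), and your ``uncorrelated / higher order'' remainders are the terms $\theta_N,\chi_N,T_N^\alpha,T_N^\beta,T_N^\zeta$ that the paper estimates one by one.

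There is, however, one genuine discrepancy you need to repair. You assert that the deterministic kernel is ``evaluated identically in both'' sums, i.e. you take $S_N(t)=\sum_k M(t-t_k)\tilde\Gamma_k\frac{y_k+y_{k+1}}{2}$. The paper's definition (\ref{Eqn: Integrals Definitions}) averages the \emph{entire} integrand, kernel included: $S_N(t)=\frac12\sum_k\big(M(t-t_{k+1})\tilde\Gamma_k y_{k+1}+M(t-t_k)\tilde\Gamma_k y_k\big)$. Under that definition your first displayed identity is false as written: $S_N-I_N$ contains, besides your correction term, kernel-increment sums such as $J_N(t)=\sum_k\big(M(t-t_{k+1})-M(t-t_k)\big)\tilde\Gamma_k y_k$ of (\ref{Eqn: Sums}), together with the relatives $\nu_N$ and $\eta_N$ that are generated once the same substitution is made inside $\tilde y_k$. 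These terms are not automatically negligible---each summand is a kernel increment times a factor of mean-square size $O(\Delta_k^{1/2})$---and the paper devotes separate estimates to them, relying on the finiteness of the total and quadratic variations of $M$ (Lemma~\ref{Lemma: Variations}). Since those extra terms do vanish in the mean-square limit under the standing assumptions, your argument is repairable, but as written it silently adopts a different definition of the Stratonovich convolution and thereby skips a family of terms that the paper must, and does, show converge to zero.
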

\begin{proof} 
	Start by using the definitions of the various integrals in Section~\ref{Section: Interpretations} to construct the partial sums over a partition $\mathcal P_N[0,t]$ (\ref{Section: Partition}) as
	\begin{equation} \label{Eqn: Integrals Definitions}
		\begin{aligned}
			S_N(t) &:= \frac{1}{2} \sum_{k=0}^{N-1} \Big(M(t-t_{k+1}) \tilde \Gamma_k y_{k+1} + M(t-t_k) \tilde \Gamma_k y_k \Big) \\
			I_N(t) &:= \sum_{k=0}^{N-1} M(t-t_k) \tilde \Gamma_k y_k\\
			R_N(t) &:=  \sum_{k=0}^{N-1} M(t-t_k) \big(M_0 \circ \cov \Gamma \big)y_k \Delta_k.
		\end{aligned}
	\end{equation}
	The proof is carried out on the partition $\mathcal P_N[0,t]$ but can be passed to the limit in $L_2(p)$ (since it is a Hilbert space and all Cauchy sequences are convergent). More precisely, we are required to prove that $\lim_{N\to\infty} \expec{D_N^2(t)} = 0 ~ \forall t \geq 0,$
	\begin{equation} \label{Eqn: Def DN}
		\text{where} \qquad D_N(t) = S_N(t) - \left(I_N(t) + \frac{1}{2} R_N(t)\right).
	\end{equation} 
	After carrying out a sequence of algebraic manipulations (Appendix~\ref{Section: Algebraic Manipulations}), the expression of $D_N(t)$ can be rewritten as
	\begin{equation} \label{Eqn: DN}
		\begin{aligned}
			D_N(t) &= \frac{1}{2} \Big( \lambda_N(t) + J_N(t) + \nu_N(t) + \xi_N(t) + T^\zeta_N(t)\Big) \\
			& \qquad+ \frac{1}{4} \Big(\theta_N(t) + \eta_N(t) + T^\alpha_N(t) + T^\beta_N(t) \Big),
	\end{aligned}
	\end{equation}
	where
	\begin{equation} \label{Eqn: Sums}
		\begin{aligned}
			\lambda_N(t) &:= \sum_{k=0}^{N-1}M(t-t_k) \Big( \big(\tilde{\boldsymbol \gamma}_k \tilde{\boldsymbol \gamma}_k^* - \cov \Gamma \Delta_k\big) \circ M_0 \Big) y_k \\
			J_N(t) &:= \sum_{k=0}^{N-1} \bigg( M(t-t_{k+1}) - M(t - t_k) \bigg) \tilde \Gamma_k y_k \\
			\nu_N(t) &:= \sum_{k=0}^{N-1} \Big( M(t-t_{k+1}) - M(t-t_k)\Big) \tilde \Gamma_k M_0 \tilde \Gamma_k y_k \\
			\theta_N(t) &:= \sum_{k=0}^{N-1} M(t-t_{k+1}) \tilde \Gamma_k M_0 \tilde \Gamma_k \tilde y_k \\
			\eta_N(t) &:= \sum_{k=0}^{N-1} M(t - t_{k+1})\tilde \Gamma_k \Big(M(\Delta_k) - M_0 \Big)\tilde \Gamma_k y_k \\
			\chi_N(t) &:= \sum_{k=0}^{N-1} M(t-t_{k+1}) \tilde \Gamma_k M(\Delta_k) \tilde w_k \\
			T_N^x(t) &:= \sum_{k=0}^{N-1} M(t-t_{k+1}) \tilde \Gamma_k x_k \quad \text{for}\quad  x \in\{\alpha, \beta, \zeta\} \\
			\alpha_k &:= \sum_{l=0}^{k-1} \bigg(M(t_{k+1} - t_{l+1}) - M(t_k -t_{l+1}) \bigg) \tilde \Gamma_l \tilde y_l  \\
			\beta_k &:= \sum_{l=0}^{k-1} \bigg( M(t_{k+1} - t_{l+1}) - M(t_k -t_{l+1}) \\ & \qquad \qquad + M(t_{k+1} - t_l) - M(t_k -t_l) \bigg) \tilde \Gamma_l y_l\\
			\zeta_k &:= \sum_{l=0}^{k-1} \bigg(M(t_{k+1} - t_{l}) - M(t_k -t_{l}) \bigg) \tilde w_l.
		\end{aligned}
	\end{equation}
	The rest of the proof shows that the second moment of each term in (\ref{Eqn: DN}) goes to zero in the limit as $N$ goes to infinity. Note that there is no need to check the expectation of cross terms (Appendix~\ref{Section: Cross Terms}).
	\subsubsection{Mean-Square Convergence of $\lambda_N(t)$} 	
	Recall that $\boldsymbol{\gamma}_k$ has independent increments that are also independent from present and past values of $y_k$. Furthermore, $\expec{Z_k} = 0$ with $Z_k:=\tilde{\boldsymbol{\gamma}}_k \tilde{\boldsymbol{\gamma}}_k^* - \cov \Gamma \Delta_k$. Then we invoke Lemma~\ref{Lemma: Ineq Ind zero} to yield the following inequality
	\begingroup\makeatletter\def\f@size{9.2}\check@mathfonts
	\def\maketag@@@#1{\hbox{\m@th\large\mnormalfont#1}}%
	\begin{align*}
		\expec{\vnorm{\lambda_N(t)}^2} & \leq \sum_{k=0}^{N-1} \mnorm{ M(t-t_k)}^2 \expec{\mnorm{ \big(Z_k \circ M_0\big) }^2} \expec{\vnorm{y_k}^2} \\
		& \leq \mnorm{M_0}^2 \sum_{k=0}^{N-1} \mnorm{M(t-t_k)}^2 \expec{ \mnorm{Z_k}^2} \expec{\vnorm{y_k}^2},
	\end{align*} \endgroup
	where the second inequality follows from the sub-multiplicative property of the matrix spectral norm with respect to matrix and Hadamard products (see \cite{horn1990analog}).
	Knowing that $\tilde \gvec_k \sim \mathcal N(0, \cov \Gamma \Delta_k)$, we can write $\tilde{\boldsymbol \gamma}_k = \cov \Gamma^{1/2} \boldsymbol \xi_k \sqrt{\Delta_k}$, where $\cov \Gamma^{1/2}$ denotes the Cholesky factorization of $\cov \Gamma$. The random vector $\boldsymbol \xi_k$ follows a standard multivariate normal distribution for all $k=0,1, ... N-1$ such that $\boldsymbol \xi_k$ and $\boldsymbol \xi_l$ are independent for $k\neq l$. To bound $\expec{\mnorm{Z_k}^2}$, we proceed as follows
	\begin{align*}
		\expec{\mnorm{Z_k}^2 } &= \expec{\mnorm{ \cov \Gamma^{1/2}(\boldsymbol \xi_k \boldsymbol \xi_k^* - I) \cov \Gamma^{1/2}}^2 \Delta_k^2}\\
		&\leq  \expec{\mnorm{\cov \Gamma} \mnorm{\boldsymbol \xi_k \boldsymbol \xi_k^* - I}^2 \Delta_k^2} \\
		&\leq \expec{ \mnorm{\cov \Gamma} \mnorm{\boldsymbol \xi_k \boldsymbol \xi_k^* - I}^2_F \Delta_k^2 }\\
		&=  \expec{ \mnorm{\cov \Gamma} \text{tr}\Big((\boldsymbol \xi_k \boldsymbol \xi_k^* - I)^*(\boldsymbol \xi_k \boldsymbol \xi_k^* - I)\Big) \Delta_k^2} \\
		&= \mnorm{\cov \Gamma} \Delta_k^2 \Big(\expec{\vnorm{\boldsymbol \xi_k}^4} - 2 \expec{\vnorm{\boldsymbol \xi_k}^2} + n \Big) \\
		&= \mnorm{\cov \Gamma} \Delta_k^2 (n^2+n).
	\end{align*}
	where the second inequality follows from the fact that the Frobenius norm of a matrix is larger than its spectral norm. The last equality follows by using Lemma~\ref{Lemma: Moments Normal Vector}, where $n$ is the number of gains $\gamma_i$.
	Finally, we obtain
	\begingroup\makeatletter\def\f@size{9.2}\check@mathfonts
	\def\maketag@@@#1{\hbox{\m@th\large\mnormalfont#1}}%
	\begin{align*} 
		\expec{\vnorm{\lambda_N(t)}^2} & \leq \mnorm{M_0}^2 c_M^2(t)\mnorm{\cov \Gamma} (n^2+n)  c_y(t) \sum_{k=0}^{N-1}  \Delta_k^2 \underset{N\to \infty}{\longrightarrow} 0,
	\end{align*} \endgroup
	where Assumption~\ref{Ass: M} and (\ref{Eqn: Bounded Moments}) are exploited.
	\subsubsection{Mean-Square Convergence of $J_N(t)$}
	This partial sum is similar to that of $\lambda_N(t)$, and thus we define $F_k(t):=M(t-t_{k+1}) - M(t-t_k)$ and invoke Lemma~\ref{Lemma: Ineq Ind zero} again to yield
	\begingroup\makeatletter\def\f@size{9.2}\check@mathfonts
	\def\maketag@@@#1{\hbox{\m@th\large\mnormalfont#1}}%
	\begin{align*}
		\expec{\vnorm{J_N(t)}^2} & \leq \sum_{k=0}^{N-1} \mnorm{F_k(t)}^2 \expec{\mnorm{\tilde \Gamma_k}^2} \expec{\vnorm{y_k}^2} \\
		& \leq c_y(t) \tr{ \cov\Gamma} \sum_{k=0}^{N-1} \mnorm{M(t-t_{k+1}) - M(t-t_k)}^2 \Delta_k \\
		& \leq c_y(t) \tr{\cov \Gamma} \pnorm \qv{t}{M} \underset{N\to\infty}{\longrightarrow} 0,
	\end{align*} \endgroup
	where the second inequality follows from (\ref{Eqn: Bounded Moments}), Lemma~\ref{Lemma: Moments Normal Vector} and the fact that $\mnorm{\tilde \Gamma_k} \leq \vnorm{\tilde \gvec_k}$ since $\tilde \Gamma_k = \mathcal D(\tilde \gvec_k)$ so that
	\begin{align} \label{Eqn: c2}
		\expec{\mnorm{\tilde \Gamma_k}^2} \leq \tr{\cov \Gamma} \Delta_k.
	\end{align}
	The last inequality follows from the fact that the quadratic variation of $M$ is finite (Lemma~\ref{Lemma: Variations}). 
	\subsubsection{Mean-Square Convergence of $\nu_N(t)$}
	By using the same previous definition of $F_k(t)$, invoke Lemma~\ref{Lemma: Ineq Ind} (with $X_k:= \tilde \Gamma_k M_0 \tilde \Gamma_k$) to yield
	\begin{align*}
		&\expec{\vnorm{\nu_N(t)}^2} \\
		& \leq \left( \sum_{k=0}^{N-1} \mnorm{F_k(t)} \sqrtp{\expec{\mnorm{\tilde \Gamma_k M_0 \tilde \Gamma_k}^2} \expec{\vnorm{y_k}^2}} \right)^2 \\
		& \leq c_y(t) \mnorm{M_0}^2 \left( \sum_{k=0}^{N-1} \mnorm{F_k(t)} \sqrtp{ \expec{\mnorm{\tilde \Gamma_k}^4} } \right)^2\\ 
		& ~\Scale[0.83]{ \leq c_y(t) \mnorm{M_0}^2 c(2,n) \mnorm{\cov \Gamma}^2 \left( \sum\limits_{k=0}^{N-1} \mnorm{M(t-t_{k+1}) - M(t-t_k)} \Delta_k \right)^2}\\
		& \leq c_y(t) \mnorm{M_0}^2 c(2,n) \pnorm \Big( \tv{t}{M} \Big)^2\underset{N\to\infty}{\longrightarrow} 0,
	\end{align*}
	where the second inequality follows from (\ref{Eqn: Bounded Moments}) and the sub-multiplicative property of the spectral norm. The third inequality follows from  Lemma~\ref{Lemma: Moments Normal Vector} where
	\begin{align} \label{Eqn: c4}
		\expec{\mnorm{\tilde \Gamma_k}^4} \leq c(2,n) \mnorm{\cov \Gamma}^2 \Delta_k^2,
	\end{align}
	and the last inequality follows from the fact that the total variation of $M$ is finite (Lemma~\ref{Lemma: Variations}). 
	\subsubsection{Mean-Square Convergence of $\eta_N(t)$}
	In a similar fashion to the previous calculation, define $G_k := M(\Delta_k) - M_0$ and invoke Lemma~\ref{Lemma: Ineq Ind} (with $X_k:= \tilde \Gamma_k G_k \tilde \Gamma_k$) to yield
	\begingroup\makeatletter\def\f@size{7.8}\check@mathfonts
	\def\maketag@@@#1{\hbox{\m@th\large\mnormalfont#1}}%
	\begin{align*}
		\expec{\vnorm{\eta_N(t)}^2} & \leq \left( \sum_{k=0}^{N-1} \mnorm{M(t-t_k)} \sqrtp{\expec{\mnorm{\tilde \Gamma_k G_k \tilde \Gamma_k}^2} \expec{\vnorm{y_k}^2}} \right)^2 \\
		& \leq c_y(t) c_M^2(t) \left( \sum_{k=0}^{N-1} \mnorm{M(\Delta_k) - M_0} \sqrtp{ \expec{\mnorm{\tilde \Gamma_k}^4} } \right)^2\\
		& \leq c_y(t) c_M^2(t) c(4,n) \mnorm{\cov \Gamma}^2  \left( \sum_{k=0}^{N-1} \mnorm{M(\Delta_k) - M_0} \Delta_k \right)^2 \\ &\underset{N\to\infty}{\longrightarrow} 0,
	\end{align*}\endgroup
	where the second inequality follows from (\ref{Eqn: Bounded Moments}), Assumption~\ref{Ass: M}, and the sub-multiplicative property of the spectral norm. Again, the last inequality follows from (\ref{Eqn: c4}). The limit is zero because Assumption~\ref{Ass: M} guarantees that $M$ is right-continuous at $t=0$.
	\subsubsection{Mean-Square Convergence of $\chi_N(t)$}
	Since $w$ and $\{\gamma_i\}$ are uncorrelated (Assumption~\ref{Ass: Uncorrelated}), invoking Lemma~\ref{Lemma: Ineq Ind zero} yields
	\begingroup\makeatletter\def\f@size{8}\check@mathfonts
	\def\maketag@@@#1{\hbox{\m@th\large\mnormalfont#1}}%
	\begin{align*}
		\expec{\vnorm{\chi_N(t)}^2} & \leq \sum_{k=0}^{N-1} \mnorm{M(t- t_{k+1})}^2 \expec{\mnorm{\tilde \Gamma_k}^2} \expec{\vnorm{M(\Delta_k) \tilde w_k}^2} \\
		&\leq c_M^4(t) \tr{\cov \Gamma} \sum_{k=0}^{N-1} \Delta_k \tr{\cov W_k} \Delta_k \\
		& \leq c_M^4(t) \tr{\cov \Gamma} c_w(t) \sum_{k=0}^{N-1} \Delta_k^2 \underset{N\to\infty}{\longrightarrow} 0,
	\end{align*}\endgroup
	where the second inequality follows from Assumptions~\ref{Ass: M} and \ref{Ass: w} and (\ref{Eqn: c2}). The last inequality follows because under Assumption~\ref{Ass: w}, $\exists$ a continuous scalar function $c_w$ such that
	\begin{align} \label{Eqn: cw}
		\sup_{0\leq \tau \leq t} \tr{\cov W(\tau)} = c_w(t).
	\end{align}
	\subsubsection{Mean-Square Convergence of $\theta_N(t)$}
	By invoking Lemma~\ref{Lemma: Ineq Dep}, we obtain the following inequality
	\begin{align*}
		\expec{\vnorm{\theta_N(t)}^2} &\leq \sum_{k=0}^{N-1} \mnorm{M(t- t_{k+1})}^2 \sqrtp{\expec{ \mnorm{\tilde \Gamma_k M_0 \tilde \Gamma_k}^4} } \\
		& \qquad \qquad \times \sqrtp{ \expec{\left(\sum_{k=0}^{N-1} \vnorm{ \tilde  y_k}^2\right)^2}},
	\end{align*}
	where the second term converges to $\sqrtp{\expec{\langle y\rangle ^2(t)}} \leq \sqrt{c_q(t)}$ defined in (\ref{Eqn: Bounded Moments}). Now apply the submultiplicative property of the spectral norm to yield
	\begingroup\makeatletter\def\f@size{8}\check@mathfonts
	\def\maketag@@@#1{\hbox{\m@th\large\mnormalfont#1}}%
	\begin{align*}
		&\expec{\vnorm{\theta_N(t)}^2} \leq \sqrt{c_q(t)} \mnorm{M_0}^2 \sum_{k=0}^{N-1} \mnorm{M(t-t_{k+1})}^2 \sqrtp{\expec{\mnorm{\tilde \Gamma_k}^8}} \\
		& \leq \sqrt{c_q(t)} \sqrt{c(4,n)} \mnorm{\cov \Gamma}^2 c_M^2(t) \mnorm{M_0}^2 \sum_{k=0}^{N-1} \Delta_k^2  \underset{N\to\infty}{\longrightarrow} 0,
	\end{align*} \endgroup
	where the last inequality follows from Assumption~\ref{Ass: M} and Lemma~\ref{Lemma: Moments Normal Vector} where $c(4,n) \mnorm{\cov \Gamma}^4 \Delta_k^4$ serves as an upper bound for the eighth moment $\expec{ \vnorm{\tilde \Gamma_k}^8}$. 
	\subsubsection{Mean-Square Convergence of $T_N^\alpha(t), T_N^\beta(t)$ and $T_N^\zeta(t)$}\quad \\
	Observe using (\ref{Eqn: Sums}) that the pairs $(\tilde \Gamma_k, \alpha_k), (\tilde \Gamma_k, \beta_k)$ and $(\tilde \Gamma_k, \zeta_k)$ are independent for all $k = 0, 1, \cdots, N-1$. Then, for $x \in \{\alpha, \beta, \zeta\}$, invoking Lemma~\ref{Lemma: Ineq Ind zero} yields 
	\begin{align*}
		\expec{\vnorm{T_N^x (t)}^2} &\leq \sum_{k=0}^{N-1} \mnorm{M(t-t_{k+1})}^2 \expec{ \mnorm{\tilde \Gamma_k}^2 } \expec{ \vnorm{x_k}^2 } \\
		& \leq c_M^2(t) \tr{\cov \Gamma} \sum_{k=0}^{N-1} \expec{ \vnorm{x_k}^2 } \Delta_k,
	\end{align*}
	where the last inequality follows from Assumption~\ref{Ass: M} and (\ref{Eqn: c2}). 
	Now, we examine $\expec{ \vnorm{\alpha_k}^2 }$. Define $F_{k,l} := M(t_{k+1} - t_{l+1}) - M(t_k - t_{l+1})$ and invoke Lemma~\ref{Lemma: Ineq Dep} to yield
	\begin{align*}
		\expec{ \vnorm{\alpha_k}^2 } &\leq \sum_{l=0}^{k-1} \mnorm{F_{k,l}}^2 \sqrtp{ \expec{\mnorm{\tilde \Gamma_l}^4}} \sqrtp{ \expec{ \sum_{l=0}^{k-1} \vnorm{\tilde y_l}^2 } } \\
		& \leq \sqrt{c(2,n)} \mnorm{\cov \Gamma}\sqrt{c_q(t)} \sum_{l=0}^{k-1} \mnorm{F_{k,l}}^2 \Delta_l \\
		& \leq \sqrt{c(2,n)} \mnorm{\cov \Gamma} \sqrt{c_q(t)} \pnorm \qv{t}{M},
	\end{align*}
	where $\pnorm = \sup_l \Delta_l$. Note that the second inequality follows from (\ref{Eqn: Bounded Moments}) and (\ref{Eqn: c4}), and the third inequality follows by observing that the sum converges to the quadratic variation of $M$ on the interval $[0,t_k]$ (Appendix~\ref{Section: Variations of M}). The last equality exploits the fact that $\qv{t}{M}$ is an increasing function in $t$. Substituting in $\expec{\vnorm{T_N^\alpha (t)}^2}$ yields
	\begingroup\makeatletter\def\f@size{7.8}\check@mathfonts
	\def\maketag@@@#1{\hbox{\m@th\large\mnormalfont#1}}%
	\begin{align*}
		\expec{\vnorm{T_N^\alpha (t)}^2} &\leq c_M^2(t) \tr{\cov \Gamma} \sqrt{c(2,n)} \mnorm{\cov \Gamma} \sqrt{c_q(t)} \pnorm \qv{t}{M} \sum_{k=0}^{N-1} \Delta_k\\
		& \leq c_M^2(t) \tr{\cov \Gamma} \sqrt{c(2,n)} \mnorm{\cov \Gamma} \sqrt{c_q(t)} \pnorm \qv{t}{M} t \underset{N\to\infty}{\longrightarrow} 0.
	\end{align*} \endgroup
	Recalling from Appendix~\ref{Section: Cross Terms} that there is no need to check the convergence of the cross terms, the same arguments used for $\expec{\vnorm{T_N^\alpha (t)}^2}$ can be used here to show that 
	\begin{align*}
		\expec{\vnorm{T_N^\beta (t)}^2} \underset{N\to\infty}{\longrightarrow} 0 \qquad \text{and} \qquad \expec{\vnorm{T_N^\zeta (t)}^2} \underset{N\to\infty}{\longrightarrow} 0.
	\end{align*}
	This completes the proof of Proposition~\ref{Proposition: Strato to Ito}.
\end{proof}
A direct application of Proposition~\ref{Proposition: Strato to Ito} to (\ref{Eqn: Single SIE}) with $\diamond = \strat$ yields (\ref{Eqn: Single SIE Ito}). This is exactly the result shown in Figure~\ref{Fig: MSS Setting Stratonovich}(b) and given in Theorem~\ref{Thm: Mean Square Equivalence}.

\section{Loop Gain Operator \& MSS Conditions} \label{Section: LGO and MSS}
In this section, we give the mathematical derivations of the LGO (\ref{Eqn: Ito LGO}) for the It\=o setting. The same analysis can be carried out for the Stratonovich case by using the conversion scheme developed in Section~\ref{Section: Ito Stratonovich Conversion}. We first lay down the necessary framework to construct a deterministic block diagram that describes the continuous-time evolution of the covariance matrices of the various signals in the loop (see Figure~\ref{Fig: Covariance Diagram}). Once this deterministic setting is constructed, the MSS analysis from there onwards resembles that of the discrete-time counterpart in \cite{bamieh2018structured}. 
\subsection{Stochastic Block Diagram Interpretation} \label{Section: Stochastic Block Diagram Interpretation}
Consider the stochastic continuous-time setting depicted in Figure~\ref{Fig: Block Diagram Interpretation}(a) satisfying Assumptions~\ref{Ass: M}-\ref{Ass: Uncorrelated}. It is the same as the general setting in Figure~\ref{Fig: MSS Setting}, but it also indicates an It\=o interpretation of the stochastic multiplicative gains. By using the definition of It\=o integrals in Section~\ref{Section: Interpretations}, we construct a discrete-time block diagram, depicted in Figure~\ref{Fig: Block Diagram Interpretation}(b), which explicitly describes the It\=o interpretation of Figure~\ref{Fig: Block Diagram Interpretation}(a). In fact, it is constructed by using a partition $\mathcal P_N[0,t]$ of $N$ subintervals on $[t_0,t_N]:= [0,t]$ as described in Section~\ref{Section: Interpretations}. Therefore, Figure~\ref{Fig: Block Diagram Interpretation}(a) can be interpreted as the limit of Figure~\ref{Fig: Block Diagram Interpretation}(b) as $N \to \infty$. Note that $\mathcal M_N$ denotes a finite dimensional approximation of $\mathcal M$ on the partition $\mathcal P_N[0,t]$, i.e. 
\[ y = \mathcal M_N \tilde u \Longleftrightarrow y_N = \sum_{k=0}^{N-1} M(t_N-t_k) \tilde u_k,\]
where the ``tilde" is used to denote the increments of a signal (refer to Section~\ref{Section: Interpretations}).
\begin{figure}[h!]
	\centering
	\begin{tabular}{ll}
		\includegraphics[scale=0.58]{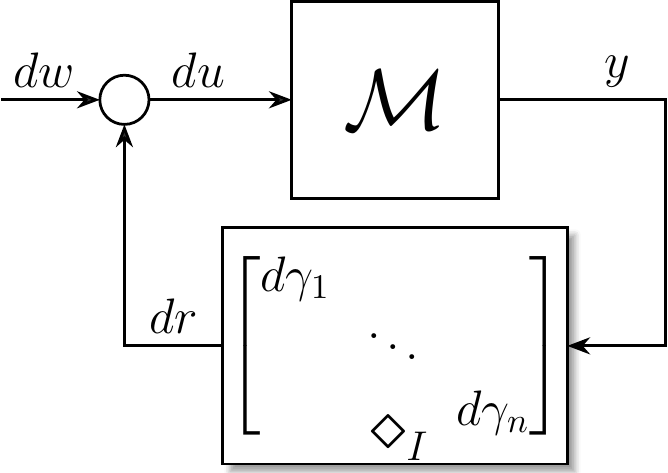} &
		\includegraphics[scale=0.58]{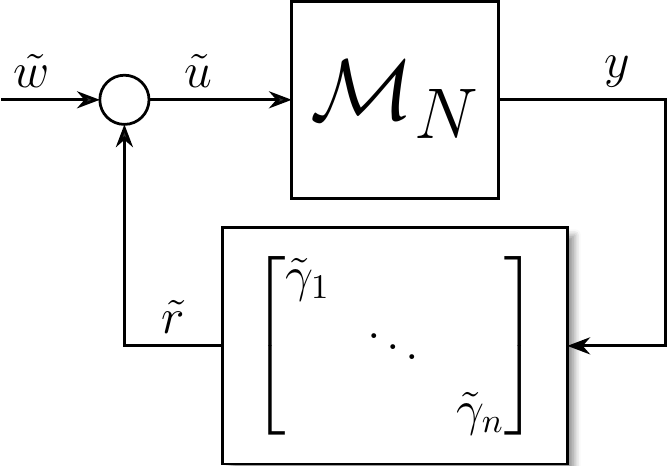} \\
		\footnotesize{(a) Continuous-Time Setting} &
		\footnotesize{(b) Discrete-Time Setting}
	\end{tabular}
	\caption{\footnotesize{A causal LTI system $\mathcal M$ in feedback with stochastic multiplicative gains $\{d\gamma_i\}$ that represent the differential forms of, possibly mutually correlated, Wiener processes. Figure~(a) shows the continuous-time MSS setting when the It\=o interpretation is adopted. Figure~(b) explicitly describes the It\=o interpretation of Figure~(a) by using a partition $\mathcal P_N[0,t]$ of $N$ subintervals as explained in \ref{Section: Interpretations}. In fact, Figure~(a) is interpreted as the limit of Figure~(b) as $N\to\infty$.}}
	\label{Fig: Block Diagram Interpretation}
\end{figure}

The equations describing the block diagrams in Figures~\ref{Fig: Block Diagram Interpretation}(a) and (b) can be respectively written as\\
\begin{subequations}
	\noindent\begin{minipage}{0.2\textwidth}
		\begin{align} \label{Eqn: Ito Interpretation}
		\left\{
		\begin{aligned} 
		y(t) &= \left(\mathcal M du\right)(t)\\
		du(t) &= dw(t) + dr(t)\\
		dr(t) &= d\Gamma(t) \ito y(t)
		\end{aligned}  \right.
		\end{align}
	\end{minipage} \qquad 
	\begin{minipage} {0.2\textwidth}
		\begin{align} \label{Eqn: Equations of MSS Setting Discretized}
		\left\{
		\begin{aligned} 
		y_N &= \left(\mathcal M_N\tilde u\right)_N  \\
		\tilde u_N &= \tilde w_N + \tilde r_N \\
		\tilde r_N &= \tilde \Gamma_N y_N
		\end{aligned} \right.
		\end{align}
	\end{minipage} \\
\end{subequations} 
The rest of this subsection shows that by adopting the It\=o interpretation (\ref{Eqn: Equations of MSS Setting Discretized}), the stochastic signal $r$ will have independent increments. Furthermore, we will derive the expression that describes the propagation of the instantaneous covariance through the feedback block. The analysis is carried out using Figure~\ref{Fig: Block Diagram Interpretation}(b) and then is passed to the limit as $N \to \infty$. 
\subsubsection{Disturbance-to-signals mapping} \quad \\
It is fairly straightforward to show that the disturbance $\tilde w$ is mapped to the various signals in the loop as 
\begin{equation} \label{Eqn: D2S Mapping}
	\begin{bmatrix} \tilde u \\ y \\ \tilde r \end{bmatrix} =
	\begin{bmatrix}
		(I-\tilde\Gamma\mathcal M_N)^{-1} \\ 
		(I-\mathcal M_N\tilde\Gamma)^{-1}\mathcal M_N\\
		(I-\tilde\Gamma\mathcal M_N)^{-1}\tilde\Gamma \mathcal M_N
	\end{bmatrix} \tilde w.	
\end{equation}
\subsubsection{Independence of $\big(d\Gamma(t), y(\tau)\big)$ for $\tau \leq t$} \label{Section: Independence of gamma and v} \quad \\
This can be shown by analyzing the second equation in (\ref{Eqn: D2S Mapping}). Examining the operator $(I - \mathcal M_N \tilde \Gamma)^{-1}$ allows us to write it, over the time horizon of the partition $\mathcal P_N[0,t]$, as
\begingroup\makeatletter\def\f@size{6.5}\check@mathfonts
\def\maketag@@@#1{\hbox{\m@th\large\mnormalfont#1}}%
\begin{align*}
	\begin{bmatrix}  I 				 				& 			& & \\ 
		-M(t_1-t_0) \tilde\Gamma_0 					& 	I		& & \\ 
		& 		\ddots	& \ddots & \\ 
		-M(t_N-t_0) \tilde\Gamma_0 			& \cdots 		& -M(t_N - t_{N-1})\tilde\Gamma_{N-1} & I
	\end{bmatrix}	^{-1} 
	=	\begin{bmatrix}	I & & \\ 
		& \ddots & \\ 
		* &  & I
		\end{bmatrix}	,	
\end{align*} \endgroup
where $*$ denotes the blocks of matrices that are functions of $\tilde \Gamma_k$ for $k=0,1, ..., N-1$. Hence the second equation in (\ref{Eqn: D2S Mapping}) can be written as
\begingroup\makeatletter\def\f@size{6.5}\check@mathfonts
\def\maketag@@@#1{\hbox{\m@th\large\mnormalfont#1}}%
\begin{align*}
	&\begin{bmatrix} 
		y_0  \\ \vdots  \\ y_N  
	\end{bmatrix} ~= 
	\begin{bmatrix}	I & & \\ 
		& \ddots & \\ 
		* &  & I
	\end{bmatrix} 
	\begin{bmatrix}  
		I 				 				& 			& & \\ 
		M(t_1-t_0)  					& 	I		& & \\ 
		& 		\ddots	& \ddots & \\ 
		M(t_N-t_0)  			& \cdots 		& M(t_N-t_{N-1}) & I
	\end{bmatrix}		
	\begin{bmatrix}
		\tilde w_0  \\ \vdots  \\ \tilde w_N    
	\end{bmatrix}.							   
\end{align*} \endgroup
Clearly, $y_N$ does not depend on $\tilde \Gamma_N$ for any positive integer $N$. Furthermore, by carrying out a similar reasoning, it is straightforward to see that $\tilde \Gamma_N$ is independent of the past values of all the signals in the loop (particularly $y$). This analysis shows that $(\tilde \Gamma_N, y_k)$ are independent for $k\leq N$. Finally, taking the limit as $N\to \infty$ completes the argument.
\subsubsection{Temporal independence of the increments of $r$} \label{Section: Independence of increments of r}\quad \\ 
The following calculation shows that $r$ has independent increments. For $k < l$, we have
\begin{align*}
	\expec{\tilde r_k \tilde r^*_l} &= \expec{\tilde \Gamma_k y_k y_l^* \tilde \Gamma^*_l}
	= \expec{\tilde \Gamma_k y_k y_l^*} \expec{\tilde \Gamma^*_l}
	= 0,
\end{align*}
where the third equality holds because $\tilde \Gamma$ has a zero-mean, and the second equality follows because $\Gamma$ has independent increments (Wiener process) and also $\tilde \Gamma$ is independent of present and past values of $y$ (Section~\ref{Section: Independence of gamma and v}). 

The combination between the causality of $\mathcal M$ and the It\=o interpretation introduces a sort of ``strict causality" in continuous-time systems. Thus the multiplicative, temporally independent gains $\{d\gamma_i(t)\}$ has a ``whitening" effect. In fact, although $y$ has nonzero temporal correlations, the signal $r$ is guaranteed to have independent increments $dr$, i.e. $\expec{dr(t) dr^*(\tau)} = 0, ~\forall t \neq \tau$.

Finally, the instantaneous covariance of $dr$ is calculated as
\begin{align*} 
	\expec{dr(t) dr(t)^*} &= \expec{d\Gamma(t) y(t) y^*(t) d\Gamma^*(t)} \\
	&= \mathbb E \bigg[d\Gamma(t) \expec{y(t)y^*(t)} d\Gamma^*(t) \bigg]\\
	&=  \cov{\Gamma} \circ\cov Y(t) dt =: \cov R(t) dt,
\end{align*}
where the second equality is a consequence of Lemma~\ref{Lemma: Matrix Indp} since $d\Gamma(t)$ and $y(t)$ are independent (Section~\ref{Section: Independence of gamma and v}). The third equality is an immediate consequence of the fact that $d\Gamma(t) = \mathcal D\big(d\gvec(t)\big)$.
Finally, we have 
\begin{equation} \label{Eqn: Hadamard Covariance}
	\cov R(t) = \cov \Gamma \circ \cov Y(t).
\end{equation}

\subsection{Covariance Feedback System} \label{Section: Covariance Feedback System}
The goal of this section is to construct a deterministic feedback system that describes the evolution of the instantaneous covariance matrices of the various signals in Figure~\ref{Fig: Block Diagram Interpretation} and finally derive the expression of the LGO given in (\ref{Eqn: Ito LGO}). 

In the previous section, we showed that $r$ has temporally independent increments. As a result, it is straightforward to see that $u$ also has temporally independent increments, because for $k < l$ we have
\begin{align*}
	\expec{\tilde u_k \tilde u_l^*} &= \expec{(\tilde w_k + \tilde r_k)(\tilde w_l + \tilde r_l)^*} \\
	&= \expec{\tilde w_k \tilde w_l^*} + \expec{\tilde r_k \tilde r_l^*} + \expec{\tilde r_k \tilde w_l} + \expec{\tilde w_k \tilde r_l^*}\\
	&= 0 + 0 + 0 + \expec{\tilde w_k y_l^* \tilde \Gamma_l^*} \\
	&= \expec{\tilde w_k y_l^*} \expec{\tilde \Gamma_l^*} = 0,
\end{align*}
where the third equality follows from the fact that $w$ (Wiener process) and $r$ (Section~\ref{Section: Independence of increments of r}) both have independent increments and the fact that $w$ is independent of past values of all the signals in the loop. The fourth equality follows from Section~\ref{Section: Independence of gamma and v} and the assumption that $w$ and $\Gamma$ are independent. Finally, passing to the limit as $N\to \infty$ yields that $du$ is temporally independent. 

As for the instantaneous covariance of $\tilde u$, we have
\begin{align*}
	\expec{\tilde u_k \tilde u_k^*} &= \expec{\tilde w_k \tilde w_k^*} + \expec{\tilde r_k \tilde r_k^*} + \expec{\tilde r_k \tilde w_k^*} + \expec{\tilde w_k \tilde r_k^*}\\
	&= \cov W_k \Delta_k  + \cov R_k \Delta_k + \expec{\tilde \Gamma_k y_k \tilde w_k^*} + \expec{\tilde w_k y_k^* \tilde \Gamma_k^*} \\
	& = (\cov W_k + \cov R_k) \Delta_k + 0 + 0 	=: \cov U_k \Delta_k.
\end{align*}
Therefore, the addition junction in Figure~\ref{Fig: Block Diagram Interpretation} remains as an addition operation on the associated covariance matrices, i.e.
\begin{align} \label{Eqn: Covaraince Additions}
	\cov U(t) = \cov W(t) + \cov R(t).
\end{align}
Furthermore, the propagation of the covariance through the forward block of Figure~\ref{Fig: Block Diagram Interpretation} is given by (\ref{Eqn: IO Cov}) which requires the input $du$ to be temporally independent for its validity. Finally, the propagation of the covariance through the feedback block is given by (\ref{Eqn: Hadamard Covariance}). Therefore, (\ref{Eqn: IO Cov}), (\ref{Eqn: Hadamard Covariance}) and (\ref{Eqn: Covaraince Additions}) can be used to construct the deterministic feedback block diagram depicted in Figure~\ref{Fig: Covariance Diagram}, where each signal is matrix-valued.
\begin{figure}[h!]
	\centering
	\includegraphics[scale=0.5]{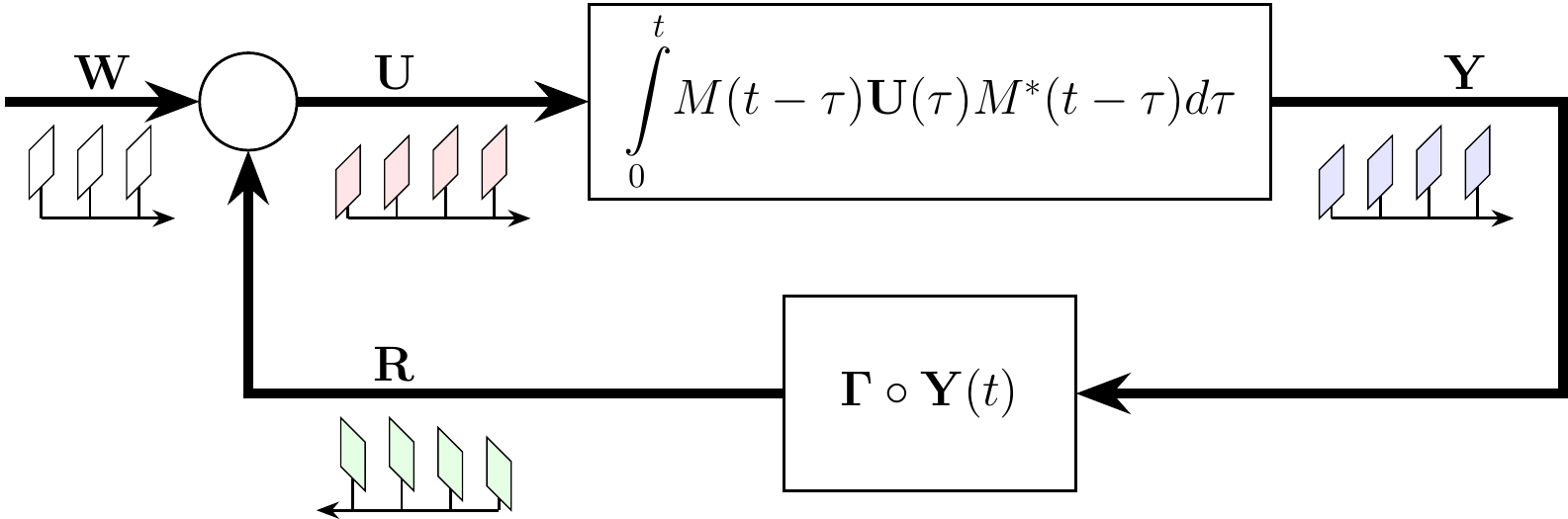} 
	\caption{\footnotesize{A deterministic block diagram describing the evolution of the covariance matrices of the various signals in the feedback loop of Figure~\ref{Fig: Block Diagram Interpretation}(a). The forward block represents a convolution integral of matrices and the feedback block represents a Hadamard (element-by-element) product. Note that all the covariance matrices in the loop are positive semi-definite and non-decreasing in time when $\cov W$ is non-decreasing, i.e. for $t_2 \geq t_1$, $\cov W(t_2) - \cov W(t_1) \geq 0$ (refer to \cite{bamieh2018structured}).}}
	\label{Fig: Covariance Diagram}
\end{figure}
The advantage of the covariance feedback system in Figure~\ref{Fig: Covariance Diagram} is that it describes a deterministic dynamical system unlike its corresponding stochastic feedback system in Figure~\ref{Fig: Block Diagram Interpretation}. Before we construct the loop gain operator, we give a remark.
\begin{remark} \label{Remark: Monotone Covariances}
	All the covariance signals in Figure~\ref{Fig: Covariance Diagram} are monotone. Particularly, if $t_1 \leq t_2$ then $\cov U(t_1) \leq \cov U(t_2)$, where the matrix ordering is taken in the usual positive semidefinite sense. Refer to \cite[Section II-E]{bamieh2018structured}.
\end{remark}
\subsection{Loop Gain Operator} \label{Section: LGO}
We are now equipped with all the necessary tools to define the continuous-time counterpart of the LGO introduced in \cite{bamieh2018structured}. Over a finite time horizon $[0,t]$, the instantaneous covariance $\cov R(t)$ can be expressed in terms of $\{\cov U(\tau), 0\leq \tau \leq t\}$ using (\ref{Eqn: IO Cov}) and (\ref{Eqn: Hadamard Covariance}) as
\begin{align} \label{Eqn: RU Relationship}
	\cov R(t) &= \cov \Gamma \circ \cov Y(t) \nonumber\\
	&= \cov \Gamma \circ \left(\int_0^t M(t-s) \cov U(s) M(t-s)ds \right) \nonumber \\
	\cov R(t) &= \cov \Gamma \circ \left(\int_0^t M(\tau) \cov U(t-\tau) M^*(\tau) d\tau\right).
\end{align}
The previous calculation motivates the definition of a finite dimensional linear operator over the infinite time horizon, i.e. as $t\to \infty$
\begin{align} \label{Eqn: ITH LGO}
\sscov R = \mathbb L\left(\sscov U\right) &:= \cov \Gamma \circ \left(\int_0^\infty M(\tau) \sscov U M^*(\tau) d\tau\right)
\end{align}
where $\sscov U$ and $\sscov R$ are the steady-state limits (if they exist) of the covariances. This linear operator acts on a matrix to produce another matrix, and it propagates the steady state covariance $\sscov U$ ``once around the loop" to produce the steady state covariance $\sscov R$ (and thus the name \textit{loop gain operator}, refer to Figure~\ref{Fig: Covariance Diagram}).  Before moving to the next section, we define here a truncated version of the LGO as
\begin{equation} \label{Eqn: Truncated LGO}
	\mathbb L_T\left(X\right) := \cov \Gamma \circ \left(\int_0^T M(\tau) X M^*(\tau) d\tau\right),
\end{equation}
which will be useful when proving Theorem~\ref{Thm: MSS Conditions}. Before stating the proof, we summarize some useful properties of the LGO in three remarks.
\begin{remark} \label{Remark: Monotone Operator}
 	The operator $\mathbb L_T$ defined in (\ref{Eqn: Truncated LGO}) is a monotone operator, i.e. if $0 \leq X \leq Y$, then $0 \leq \mathbb L_T(X) \leq \mathbb L_T(Y)$. The same property holds for $\mathbb L$ defined in $(\ref{Eqn: ITH LGO})$ since $\mathbb L = \lim_{T\to\infty} \mathbb L_T$. Refer to \cite[Section II-E]{bamieh2018structured} for details, noting that the same arguments also hold for integrals as well as summations.
 \end{remark}
 \begin{remark} \label{Remark: Monotone Operator in Time}
 	The operator $\mathbb L_T$ is also monotone in time, i.e. if $T_1 \leq T_2$, then $0 \leq \mathbb L_{T_1}(X) \leq \mathbb L_{T_2} (X)$ for any $X \geq 0$.
 	This is easy to validate by checking that $\mathbb L_{T_2}(X) - \mathbb L_{T_1}(X)$ is positive semidefinite. Consequently, for any $T>0$ and $X \geq0$, we have $0 \leq \mathbb L_T(X) \leq \mathbb L(X)$.
 \end{remark}
 \begin{remark} \label{Remark: Perron Frobenius}
 	The spectral radius of $\mathbb L$ is its largest eigenvalue which is guaranteed to be a real number. Furthermore, the ``eigen-matrix" associated with the largest eigenvalue is guaranteed to be positive semidefinite. That is, if $\rho(\mathbb L)$ denotes the spectral radius of $\mathbb L$, then $\exists \ecov U \geq 0$ s.t. $\mathbb L(\ecov U) = \rho(\mathbb L) \ecov U$.
 	Note that $\ecov U$ is the matrix counterpart of the Perron-Frobenius vector for matrices with nonnegative entries. This is the covariance mode that has the fastest growth rate if MSS is violated, and therefore we refer to $\ecov U$ as the worst-case covariance. (Refer to \cite[Thm 2.3]{bamieh2018structured} for more details.)
 \end{remark}
\subsection{MSS Conditions} \label{Section: MSS Conditions Proof}
Equipped with the LGO, we can now present the proof of Theorem~\ref{Thm: MSS Conditions}. The proof is very similar to the discrete-time counterpart in \cite{bamieh2018structured}, and thus some of the details are omitted. 
\begin{proof} 
\subsubsection{if} 
Using (\ref{Eqn: Covaraince Additions}) and (\ref{Eqn: RU Relationship}), $\cov U(t)$ can be written as
\begin{align*}
	\cov U(t) &= \cov \Gamma \circ \left(\int_0^t M(\tau) \cov U(t-\tau) M^*(\tau) d\tau\right) + \cov W(t)\\
	&\leq \cov \Gamma \circ \left(\int_0^t M(\tau) \cov U(t) M^*(\tau) d\tau\right) + \cov W(t)\\
	& \leq \mathbb L \Big(\cov U(t)\Big) + \cov W(t),
\end{align*}
where the first inequality follows from Schur's theorem \cite[Thm 2.1]{horn1992block} and the fact that $\cov U(t-\tau) \leq \cov U(t)$ for all $\tau \in[0,t]$ (Remark~\ref{Remark: Monotone Covariances}). The second inequality follows from Remark~\ref{Remark: Monotone Operator in Time}. To obtain an upper bound on $\cov U(t)$, we let $\mathbb I$ denote the identity operator and rearrange to obtain
\begin{align*} 
	(\mathbb I - \mathbb L) \cov U(t) &\leq \cov W(t) \leq \sscov W\\
	\cov U(t) &\leq (\mathbb I - \mathbb L)^{-1} \sscov W, 
\end{align*}
where the second equality is obtained by replacing $\cov W(t)$ with its steady state value $\sscov W$ since it is assumed to be monotone (Assumption~\ref{Ass: w}). The third inequality is obtained by applying \cite[Thm 2.3]{bamieh2018structured} which guarantees that the operator $(\mathbb I - \mathbb L)^{-1}$ exists and is monotone whenever $\mathbb L$ is monotone and $\rho(\mathbb L) < 1$. Finally the stability of $\mathcal M$ (finite $H^2-\text{norm}$) guarantees that all other covariance signals in the loop of Figure~\ref{Fig: Covariance Diagram} are also uniformly bounded thus guaranteeing MSS.
\subsubsection{only if} First it is straightforward to show that MSS is lost if the $H^2$-norm of $M$ is infinite (regardless of the value of $\rho(\mathbb L))$. Using Figure~\ref{Fig: Covariance Diagram}, we can write the covariance $\cov Y(t)$ as
\begin{align*}
	\cov Y(t) &= \int_0^t M(t-\tau) \cov U(\tau) M^*(t-\tau)d\tau \\
	&= \int_0^t M(t-\tau) \Big(\cov W(\tau) + \cov \Gamma \circ \cov Y(\tau) \Big) M^*(t-\tau)  d\tau \\
	& \geq \int_0^t M(t-\tau) \cov W(\tau) M^*(t-\tau) d\tau,
\end{align*}
where the inequality follows from the fact that $\cov \Gamma \circ \cov Y(\tau)$ is positive semidefinite. Thus, clearly $\cov Y(t)$ grows unboundedly when $M$ has an infinite $H^2$-norm (take $\cov W(t) = I$ for example).

Next, assume that $M$ has a finite $H^2$-norm. We will show that if $\rho(\mathbb L) \geq 1$, then $\cov U(t)$ grows unboundedly in time. We do so by examining $\cov U(t)$ at the time samples $t_k:= kT$, where $k$ is a positive integer and $T > 0$. Using Figure~\ref{Fig: Covariance Diagram}, we obtain
\begin{align} \label{Eqn: Recursion}
	\cov U(t_k) &= \cov \Gamma \circ \int_0^{t_k} M(t_k - \tau) \cov U(\tau) M^*(t_k - \tau) d\tau + \cov W(t_k) \nonumber\\
	& \geq  \cov \Gamma \circ \int_{t_{k-1}}^{t_k} M(t_k - \tau) \cov U(\tau) M^*(t_k - \tau) d\tau + \cov W(t_k) \nonumber\\
	& \geq \cov \Gamma \circ \int_{t_{k-1}}^{t_k} M(t_k - \tau) \cov U(t_{k-1}) M^*(t_k - \tau) d\tau + \cov W(t_k) \nonumber \\
	& \geq \cov \Gamma \circ \int_0^T M(s) \cov U(t_{k-1}) M^*(s) ds + \cov W(t_k) \nonumber\\
	& = \mathbb L_T \Big(\cov U(t_{k-1}) \Big) + \cov W(t_k) \nonumber\\
	\cov U(t_k)  &\geq \mathbb L_T^k\Big( \cov U(0) \Big) + \sum_{r=0}^{k-1} \mathbb L_T^r \Big( \cov W(t_{k-r})\Big), 
\end{align}
where the first inequality follows from the fact that the integrand is positive semidefinite, the second inequality follows because $\cov U(\tau) \geq \cov U(t_{k-1})$ for $\tau \in [t_{k-1}, t_k]$, and the third inequality is a consequence of applying the change of variable $s:=t_k - \tau$. The last inequality is a consequence of a simple induction argument that exploits the monotonicity of $\mathbb L_T$ (Remark~\ref{Remark: Monotone Operator}). Establishing the inequality (\ref{Eqn: Recursion}) allows us to use the same arguments in \cite{bamieh2018structured} (repeated here for completeness) to show that $\cov U(t_k)$ grows unboundedly. 

Set the exogenous covariance $\cov W(t_k) = \ecov U$, where $\ecov U$ is the worst-case covariance described in Remark~\ref{Remark: Perron Frobenius}. Note that the initial covariance is $\cov U_0 = \ecov U$. 
Substituting in (\ref{Eqn: Recursion}) yields
\begin{align} \label{Eqn: Recursion IC}
	\cov U(t_k) \geq \sum_{r=0}^k \mathbb L_T^r \left(\ecov U \right).
\end{align}
Since $\lim_{T\rightarrow\infty} \mathbb L_T(\ecov U) =\mathbb L(\ecov U) = \rho(\mathbb L) \ecov U$,  then for any ${\epsilon>0}$, 
$\exists ~T>0$ such that $||\rho(\mathbb L) \ecov U - \mathbb L_T(\ecov U)|| \leq \epsilon ||\ecov U||$. This inequality coupled with the fact that $0 \leq \mathbb L_T(\ecov U) \leq \rho(\mathbb L) \ecov U$ allows us to invoke \cite[Lemma A.3]{bamieh2018structured} to obtain
\begin{equation}\label{Eqn: Bound on LT}
	\mathbb L_T(\ecov U)     ~\geq~     \left( \rho(\mathbb L) - \epsilon c\right)   ~ \ecov U ~=:~    \alpha~ \ecov U, 
\end{equation}
where $c$ is a positive constant that only depends on $\ecov U$. 
Then, by (\ref{Eqn: Recursion}), the one-step lower bound~(\ref{Eqn: Bound on LT}) becomes
\begin{equation}\label{Eqn: U Growth}
	\cov U(t_k)~\geq~ \left( \sum_{r=0}^k \alpha^r \right) ~ \ecov U
	~=~ \frac{\alpha^{k+1}-1}{\alpha-1} ~ \ecov U.
\end{equation}
First consider the case when $\rho(\mathbb L)>1$, then $\epsilon$ can be chosen small enough so that $\alpha>1$ and therefore $\{\ecov U(t_k)\}$ is a geometrically growing 
sequence. As for the case where $\rho(\mathbb L) = 1$, we have $\alpha = 1 - \epsilon$. Then for $0 < \epsilon < 1$, we have
$$ \sscov U = \lim_{k\to \infty} \cov U(t_k) \geq \frac{1}{\epsilon} \ecov U.$$
This proves that $\cov U(t)$ can grow arbitrarily large (although not necessarily geometrically) since $\epsilon$ can be chosen to be arbitrarily small.
\end{proof}

\section{Conclusion} \label{Section: Conclusion}
This paper examines the conditions of MSS for LTI systems in feedback with multiplicative stochastic gains. The analysis is carried out from a purely-input output approach as compared to (the more common) state space approach in the literature. The advantage of this approach is encompassing a wider range of models. It is shown that in the continuous-time setting, technical subtleties arise that require to exploit several tools from stochastic calculus. Different stochastic interpretations are considered for which different stochastic block diagram representations are constructed. Finally, it is shown that MSS analysis for state space realizations can be transparently carried out as a special case of our approach.

\section*{Acknowledgments}
The authors would like to thank Professor Jean-Pierre Fouque for the valuable discussions on stochastic calculus.

\bibliographystyle{ieeetr}
\bibliography{CT_StochUnc_References}

\appendix

\subsection{Interpretations of Stochastic Convolution} \label{Section: Stochastic Convolution}
Consider the stochastic convolution in (\ref{Eqn: Stochastic Convolution}) satisfying Assumption~\ref{Ass: M}. Exploiting the partition $\mathcal P_N[0,t]$ described in Section~\ref{Section: Partition} and the notation developed in Section~\ref{Section: Increments} yield
$$ y(t) = \lim_{N\to \infty} \sum_{k=0}^{N-1} M(t-\bar t_k) \tilde u_k, $$
where $\bar t_k \in [t_k, t_{k+1}]$. The choice of $\bar t_k$ prescribes a particular stochastic interpretation of the integral, for example $\bar t_k = t_k$ corresponds to an It\=o interpretation. The following calculation shows that the covariance of $y$ does not depend on the choice of $\bar t_k$ when $M \in \mathcal C$ defined in Appendix~\ref{Section: Variations of M}.
\begin{align*}
	\cov Y(t) &:= \expec{y(t)y^*(t)} \\
	&= \lim_{N\to \infty} \sum_{k,l=0}^{N-1} M(t-\bar t_k)\expec{\tilde u_k \tilde u_l^*} M^*(t-\bar t_l) \\
	&= \lim_{N\to \infty} \sum_{k=0}^{N-1} M(t-\bar t_k) \expec{\tilde u_k \tilde u_k^*} M^*(t-\bar t_k) \\
	&= \lim_{N\to \infty} \sum_{k=0}^{N-1} M(t-\bar t_k) \cov U(t_k) \Delta_k M^*(t-\bar t_k) \\
	&= \int_0^t M(t-\tau) \cov U(\tau) M^*(t-\tau) d\tau,
\end{align*}
where the third equality follows from the temporal independence of $u$ and the fourth equality follows from the definition of the covariance of $du$. The last equality is a consequence of Riemann integrability which guarantees convergence to a unique value when $M \in \mathcal C$. As a result, there is no need to prescribe a stochastic interpretation of (\ref{Eqn: Stochastic Convolution}) since different stochastic interpretations  play the same role in the mean-square sense.  
\subsection{Calculation of $D_N(t)$ in (\ref{Eqn: Def DN})} \label{Section: Algebraic Manipulations}
This appendix shows the required algebraic manipulations to arrive at the expression of $D_N(t)$ in (\ref{Eqn: DN}). Start by adding and subtracting $M(t-t_k) \tilde \Gamma_k y_k$ in the partial sum of $S_N(t)$ in (\ref{Eqn: Integrals Definitions}) to obtain 
\begingroup\makeatletter\def\f@size{9}\check@mathfonts
\def\maketag@@@#1{\hbox{\m@th\large\mnormalfont#1}}
\begin{align*} 
	S_N(t) = I_N(t) + \frac{1}{2} \sum_{k=0}^{N-1} \bigg(M(t-t_{k+1}) \tilde \Gamma_k y_{k+1} - M(t-t_k) \tilde \Gamma_k y_k \bigg) ,
\end{align*}\endgroup
where $I_N(t)$ is defined in (\ref{Eqn: Integrals Definitions}). Adding and subtracting ${M(t-t_{k+1}) \tilde \Gamma_k y_k}$ in the sum of the second term yields
\begin{align} \label{Eqn: SISO Discretized Strat Integral}
	S_N(t) &= I_N(t) + \frac{1}{2} \big(Q_N(t) + J_N(t)\big),
\end{align}
where $J_N(t)$ is given in (\ref{Eqn: Sums}) and
\begin{align} 
	Q_N(t) &:= \sum_{k=0}^{N-1} M(t-t_{k+1}) \tilde \Gamma_k \tilde y_k \label{Eqn: QN First Form}
\end{align}
Observe that $Q_N(t)$ (\ref{Eqn: QN First Form}) is a cross quadratic-variation-like term whose limit is not obvious, so we examine the increments $\tilde y_k$ using (\ref{Eqn: Single SIE}) with $\diamond = \strat$. We have
\begin{align} \label{Eqn: y Increments}
	\tilde y_k &= E_{k+1}(t_{k+1}) - E_{k}(t_k) + S_{k+1}(t_{k+1}) - S_k(t_k) \nonumber \\
	\tilde y_k &=:\tilde E_k + \tilde I_k + \frac{1}{2}\left( \tilde Q_k + \tilde J_k\right).
\end{align}
where $E_N(t): = \sum\limits_{k=0}^{N-1} M(t-t_k) \tilde w_k$. 
Start by calculating $\tilde E_k$
\begin{align*}
	\tilde E_k &= \sum_{l=0}^{k} M(t_{k+1} - t_l) \tilde w_l - \sum_{l=0}^{k-1} M(t_{k} - t_l) \tilde w_l \\
	&= M(\Delta_k) \tilde w_k  + \sum_{l=0}^{k-1} \bigg(M(t_{k+1}-t_l) - M(t_k-t_l)\bigg) \tilde w_l.
\end{align*}
Carrying out similar calculations for $\tilde I_k, \tilde Q_k$ and $\tilde J_k$ yields
\begin{align*}
	\tilde I_k &= M(\Delta_k) \tilde \Gamma_k y_k  + \sum_{l=0}^{k-1} \bigg(M(t_{k+1}-t_l) - M(t_k-t_l)\bigg) \tilde \Gamma_l y_l \\
	\tilde Q_k &= M_0 \tilde \Gamma_k \tilde y_k + \sum_{l=0}^{k-1} \bigg(M(t_{k+1} - t_{l+1}) - M(t_k - t_{l+1}) \bigg) \tilde \Gamma_l \tilde y_l\\
	\tilde J_k &= \bigg(M_0 - M(\Delta_k) \bigg)  \tilde \Gamma_k y_k +
	\sum_{l=0}^{k-1} \bigg( M(t_{k+1}- t_{l+1})  \\
	& \qquad  - M(t_k-t_{l+1}) + M(t_k-t_l) - M(t_{k+1}-t_l) \bigg)  \tilde \Gamma_l y_l,
\end{align*}
where $M_0$ denotes $M(0)$ for notational brevity. Substituting for the expression of $\tilde y_k$ (\ref{Eqn: y Increments}) in $Q_N(t)$ (\ref{Eqn: QN First Form}) and collecting terms yield
\begin{align*} 
	Q_N(t) &= \frac{1}{2} \Big(\theta_N(t) + \eta_N(t)  + T_N^\alpha (t) + T_N^\beta (t)\Big) \nonumber \\
	& \quad + \chi_N(t) + T^\zeta_N(t) + \sum_{k=0}^{N-1} M(t-t_{k+1})\tilde \Gamma_k M_0 \tilde \Gamma_k y_k,
\end{align*}
where $\theta_N(t), \eta_N(t), \chi_N(t), T_N^\alpha (t), T_N^\beta(t)$ and $T_N^\zeta (t)$ are all defined in (\ref{Eqn: Sums}). Adding and subtracting $ M(t-t_k) \tilde \Gamma_k M_0 \tilde \Gamma_k y_k$ in the partial sum of the last term yields 
\begin{align} \label{Eqn: QN Second Form}
	Q_N(t) &= \frac{1}{2} \Big(\theta_N(t) + \eta_N(t) + T_N^\alpha (t) + T_N^\beta (t) \Big) + \nu_N(t) \nonumber \\
	& + \chi_N(t) + T_N^\zeta(t) + \sum_{k=0}^{N-1} M(t-t_k)\tilde \Gamma_k M_0 \tilde \Gamma_k y_k,
\end{align}
where $\nu_N(t)$ is defined in (\ref{Eqn: Sums}). Finally, $D_N(t)$ is calculated as
\begin{align} \label{Eqn: SISO DN Form1}
	D_N(t) &:= S_N(t) - \left(I_N(t) + \frac{1}{2} R_N(t) \right) \nonumber \\
	&= \frac{1}{2} \bigg(Q_N(t) - R_N(t) + J_N(t)\bigg).
\end{align}
Substituting for $Q_N(t)$ from (\ref{Eqn: QN Second Form}), $R_N(t)$ from (\ref{Eqn: Integrals Definitions}), and $J_N(t)$ from (\ref{Eqn: Sums}), yields the expression of $D_N(t)$ given in (\ref{Eqn: DN}) after exploiting the following equation
$$ \tilde \Gamma_k M_0 \tilde \Gamma_k - (M_0 \circ \cov \Gamma) \Delta_k = \Big(\tilde{\boldsymbol \gamma}_k \tilde{\boldsymbol \gamma}_k^* - \cov \Gamma \Delta_k \Big) \circ M_0,$$
where $\tilde{\boldsymbol \gamma}_k = \mathcal D(\Gamma_k)$ is the vector formed of the diagonal entries of $\Gamma_k$.

\subsection{Second Moments of Cross Terms} \label{Section: Cross Terms}
Let $x$ and $y$ be two vector-valued random variables. The subsequent calculation shows that to check if $\expec{\vnorm{x+y}^2}$ is zero, it suffices to check that $\expec{\vnorm{x}^2} = \expec{\vnorm{y}^2} = 0$. 
\begingroup\makeatletter\def\f@size{9}\check@mathfonts
\def\maketag@@@#1{\hbox{\m@th\large\mnormalfont#1}}
\begin{align*}
	\expec{ \vnorm{x + y}^2 } & \leq \expec{\left( \vnorm{x} + \vnorm{y} \right)^2} \\
	&= \expec{ \vnorm{x}^2 + \vnorm{y}^2 + 2 \vnorm{x}\vnorm{y} } \\
	& \leq \expec{ \vnorm{x}^2 }  + \expec{\vnorm{y}^2} + 2 \sqrt{ \expec{\vnorm{x}^2} \expec{\vnorm{y}^2} },
\end{align*}\endgroup
where the first inequality is a consequence of applying the triangle inequality, and the last one follows from Cauchy-Schwarz inequality with respect to expectations. Observe that if $\expec{ \vnorm{x}^2 }$ or $\expec{ \vnorm{y}^2 }$ is zero, then the cross term is zero. Therefore, to prove that the variance of the sum of random variables is equal to zero, there is no need to calculate the expectation of cross terms.

\subsection{Useful Equalities \& Inequalities} \label{Section: Useful Lemmas on Inequalities}
This appendix provides a sequence of lemmas that give some useful equalities and inequalities (upper bounds) that are used in the proofs throughout the paper. 
\begin{lemma} \label{Lemma: Matrix Indp}
	Let $X$ and $v$ be a matrix-valued and vector-valued random variables, respectively. If $X$ and $v$ are independent and $D_v:= \mathcal D(v)$, then 
	$$ \expec{D_vXD_v} = \expec{vv^*} \circ \expec{X}.$$
\end{lemma}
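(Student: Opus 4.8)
The plan is to prove the identity entrywise, since both sides are matrices and the Hadamard product is defined coordinate-by-coordinate. First I would expand the $(i,j)$ entry of the triple product $D_v X D_v$. Because $D_v = \mathcal D(v)$ is diagonal with $(D_v)_{ii} = v_i$, the two matrix multiplications collapse to a single scaling: $(D_v X D_v)_{ij} = (D_v)_{ii}\, X_{ij}\, (D_v)_{jj} = v_i\, X_{ij}\, v_j$. Thus each entry of $D_v X D_v$ is simply the corresponding entry of $X$ weighted by the scalar $v_i v_j$, and no off-diagonal contributions of $X$ survive beyond this weighting.

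Next I would take expectations entrywise, obtaining $\expec{(D_v X D_v)_{ij}} = \expec{v_i v_j\, X_{ij}}$. Here I invoke the independence hypothesis: since $X$ and $v$ are independent, the scalar $v_i v_j$ (a measurable function of $v$) is independent of $X_{ij}$ (a measurable function of $X$), so the expectation of the product factors as $\expec{v_i v_j\, X_{ij}} = \expec{v_i v_j}\,\expec{X_{ij}}$.

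Finally I would recognize the factored form as a Hadamard product. The $(i,j)$ entry of $\expec{vv^*}$ is exactly $\expec{v_i v_j}$ (using the paper's convention that $*$ denotes transpose, so there is no conjugation to track), and the $(i,j)$ entry of $\expec{X}$ is $\expec{X_{ij}}$. Hence $\expec{v_i v_j}\,\expec{X_{ij}}$ is precisely the $(i,j)$ entry of $\expec{vv^*}\circ\expec{X}$. Since this holds for every pair $(i,j)$, the matrix identity $\expec{D_v X D_v} = \expec{vv^*}\circ\expec{X}$ follows.

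The argument is essentially a short computation, so there is no serious obstacle; the one point requiring care — which I would flag explicitly — is the factorization step, where one must justify passing from independence of the \emph{vector} $v$ and the \emph{matrix} $X$ to independence of their coordinate functions $v_i v_j$ and $X_{ij}$, and ensure the relevant moments are finite so that all expectations exist (guaranteed here by $v$ and $X$ being second order).
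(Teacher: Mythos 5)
Your proof is correct and follows essentially the same route as the paper's: expand the $(i,j)$ entry of $D_v X D_v$ as $v_i X_{ij} v_j$ using diagonality, factor the expectation by independence, and recognize the result as the $(i,j)$ entry of $\expec{vv^*}\circ\expec{X}$. Your extra remarks on coordinate-wise independence and moment existence are careful additions but do not change the argument.
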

\begin{proof}
	Let $X_{ij}$ denote the $ij^{\text{th}}$ entry of the matrix $X$. Then
	\begin{align*}
		\expec{D_v X D_v}_{ij} &= \expec{v_i X_{ij} v_j} = \expec{v_i v_j} \expec{X_{ij}} \\
		&= \expec{v v^*}_{ij} \expec{X}_{ij},
	\end{align*}
	where the first equality holds because $D_v:= \mathcal D(v)$ is diagonal, and the second equality hold because $X$ and $v$ are independent. The proof is complete since the Hadamard product ``$\circ$" is the element-by-element multiplication.
\end{proof}
\begin{lemma} \label{Lemma: Moments Normal Vector}
	Let $x = \begin{bmatrix} x_1 & x_2 & \cdots & x_n \end{bmatrix}^*$ be a zero-mean random vector that follows a multivariate normal distribution with a covariance matrix $\cov \Sigma := \expec{xx^*}$. Then
	$$ \expec{\vnorm{x}^2} = \text{tr}(\cov \Sigma) \quad \text{and} \quad \expec{\vnorm{x}^{2p}}  \leq c(p,n)\mnorm{\cov \Sigma}^p,$$
	where $p$ is any positive integer and $c$ is a constant that depends on $p$ and $n$. For example, one can check that $c(1,n) = n$ and $c(2,n) = n^2 + 2n$.
\end{lemma}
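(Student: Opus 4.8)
The plan is to decouple the spectral size of $\cov \Sigma$ from the purely dimensional dependence by whitening the Gaussian. The first identity needs no Gaussian structure: expanding $\vnorm{x}^2 = x^* x = \sum_{i=1}^n x_i^2$ and taking expectations gives $\expec{\vnorm{x}^2} = \sum_i \expec{x_i^2} = \sum_i \cov \Sigma_{ii} = \tr{\cov \Sigma}$, directly from $\cov \Sigma = \expec{xx^*}$. So I would dispose of this part immediately and devote the work to the moment bound.

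For the inequality, the key device is the representation $x \stackrel{d}{=} \cov \Sigma^{1/2} \xi$, where $\cov \Sigma^{1/2}$ is the symmetric positive semidefinite square root and $\xi \sim \mathcal N(0, I_n)$ is a standard normal vector. Then $\vnorm{x}^2 = \xi^* \cov \Sigma \, \xi$, and since $\cov \Sigma$ is positive semidefinite with largest eigenvalue equal to its spectral norm $\mnorm{\cov \Sigma}$, the Rayleigh quotient bound gives the \emph{pathwise} inequality $\xi^* \cov \Sigma \, \xi \leq \mnorm{\cov \Sigma}\, \vnorm{\xi}^2$, valid for every realization of $\xi$. Raising both sides to the $p$-th power and taking expectations yields $\expec{\vnorm{x}^{2p}} \leq \mnorm{\cov \Sigma}^p \, \expec{\vnorm{\xi}^{2p}}$, which is already in the claimed form with $c(p,n) := \expec{\vnorm{\xi}^{2p}}$ depending only on $p$ and $n$.

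It then remains to identify the constant. Here $\vnorm{\xi}^2 = \sum_{i=1}^n \xi_i^2$ is a chi-squared random variable with $n$ degrees of freedom, whose $p$-th moment is the classical quantity $\expec{\vnorm{\xi}^{2p}} = 2^p\,\Gamma(p + n/2)/\Gamma(n/2) = \prod_{j=0}^{p-1}(n + 2j)$, a finite number for every positive integer $p$. Specializing recovers the stated examples $c(1,n) = n$ and $c(2,n) = n(n+2) = n^2 + 2n$, so I would present this product form and check the two sample values explicitly.

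I do not expect a genuine obstacle. The only substantive ingredient is the whitening step, which cleanly isolates $\mnorm{\cov \Sigma}^p$ and reduces everything to a standard chi-squared moment. The one point worth stating carefully is that the bound $\xi^* \cov \Sigma \, \xi \leq \mnorm{\cov \Sigma}\, \vnorm{\xi}^2$ holds realization-by-realization, so it legitimately survives being raised to the power $p$ before the expectation is taken; this is exactly what renders the final bound dimension-free in $\cov \Sigma$ beyond its spectral norm.
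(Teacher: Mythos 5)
Your proof is correct and follows essentially the same route as the paper: whitening via $\cov \Sigma^{1/2}$, a pathwise spectral-norm bound $\vnorm{\cov \Sigma^{1/2}\xi}^{2p} \leq \mnorm{\cov \Sigma}^p \vnorm{\xi}^{2p}$, and reduction to the moments of a standard Gaussian vector. The only cosmetic difference is in the final bookkeeping: you evaluate $\expec{\vnorm{\xi}^{2p}}$ in closed form as the chi-squared moment $\prod_{j=0}^{p-1}(n+2j)$, while the paper expands it via the multinomial theorem into a sum of double factorials; both give the same constant $c(p,n)$, with your expression being the tidier one.
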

\begin{proof}
	For the second moment, we have
	\begin{align*}
		\expec{\vnorm{x}^2} &= \sum_{i=1}^n \expec{x_i^2} = \sum_{i=1}^n \cov \Sigma_{ii} = \text{tr}(\cov \Sigma). 
	\end{align*}
	To calculate the fourth moment, let $\cov \Sigma^{1/2}$ denote the Cholesky factorization of $\cov \Sigma$ so that $x = \cov \Sigma^{1/2} \xi$ where $\xi$ follows the standard multivariate normal distribution. Then
	\begin{align*}
		\expec{\vnorm{x}^{2p}} &= \expec{\vnorm{\cov \Sigma^{\frac{1}{2}} \xi}^{2p}} 
		\leq \mnorm{\cov \Sigma}^p \expec{\vnorm{\xi}^{2p}}  \\
		&= \mnorm{\cov \Sigma}^p \expec{\left( \sum_{i=1}^n \xi_i^2 \right)^p} \\
		&= \mnorm{\cov \Sigma}^p \expec{ \sum_{k_1 +k_2+ \cdots + k_n = p} p! \prod_{i=1}^n \frac{\xi_i^{2 k_i}}{k_i!} } \\
		&= \mnorm{\cov \Sigma}^p \sum_{k_1 +k_2+ \cdots + k_n=p} p! \prod_{i=1}^n \frac{\expec{\xi_i^{2 k_i}}}{k_i!}  \\
		&= \mnorm{\cov \Sigma}^p p!\sum_{k_1 +k_2+ \cdots + k_n=p}  \prod_{i=1}^n \frac{\left(2k_i-1\right)!!}{k_i!} \\ &=: c(p,n) \mnorm{\cov \Sigma}^p  ,
	\end{align*}
	where ``$!!$" is the double factorial operation. The inequality follows from the sub-multiplicative property of the norms, the third equality is a direct application of the multinomial theorem, and the fourth equality holds because $\{\xi_i\}$ are mutually independent. Finally, the fifth equality follows because the $m^{\text{th}}$ moment of a standard normal random variable is $(m-1)!!$ when $m$ is even. 
\end{proof}

Throughout Lemmas \ref{Lemma: Ineq Triangle Sub-multiplicative}-\ref{Lemma: Ineq Ind zero}, let $\{X_k\}$ and $\{y_k\}$ be two sequences of square random matrices and random vectors, respectively, with bounded second moments. Furthermore, let $\{F_k\}$ be a sequence of deterministic matrices.
\begin{lemma} \label{Lemma: Ineq Triangle Sub-multiplicative}
	Exploiting the triangle inequality and the sub-multiplicative property of the norm yields
	$$ \expec{\mnorm{\sum_{k=0}^{N-1} F_k X_k y_k}^2} \leq \expec{ \left( \sum_{k=0}^{N-1} \mnorm{F_k}\mnorm{X_k} \vnorm{y_k}  \right)^2 }. $$
\end{lemma}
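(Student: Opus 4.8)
The plan is to prove the inequality pathwise (i.e., for each fixed sample point $\omega \in \Omega$) and then take expectations, exploiting the fact that $\mathbb E[\cdot]$ is monotone with respect to the pointwise ordering of nonnegative random variables. The statement is purely a consequence of the norm axioms, so no probabilistic structure beyond monotonicity of the integral is needed.

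First I would apply the triangle inequality for the vector $\ell^2$-norm to the vector-valued sum, which gives $\mnorm{\sum_{k=0}^{N-1} F_k X_k y_k} \leq \sum_{k=0}^{N-1} \mnorm{F_k X_k y_k}$. Next I would bound each summand by submultiplicativity: since the spectral norm of a matrix is compatible with (dominates) the induced action on the $\ell^2$-norm of vectors, one has $\mnorm{F_k X_k y_k} \leq \mnorm{F_k}\,\mnorm{X_k}\,\vnorm{y_k}$ for every $k$. Chaining these two steps produces the pathwise bound $\mnorm{\sum_{k=0}^{N-1} F_k X_k y_k} \leq \sum_{k=0}^{N-1} \mnorm{F_k}\,\mnorm{X_k}\,\vnorm{y_k}$.

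Finally, since both sides of this last bound are nonnegative scalars, squaring preserves the inequality, yielding $\mnorm{\sum_{k=0}^{N-1} F_k X_k y_k}^2 \leq \big(\sum_{k=0}^{N-1} \mnorm{F_k}\,\mnorm{X_k}\,\vnorm{y_k}\big)^2$ at every $\omega$. Taking expectations of both sides and invoking monotonicity of the expectation then gives the claimed result. I do not anticipate a genuine obstacle here: the only point requiring a moment of care is ensuring that every inequality in the chain is a deterministic (pathwise) statement valid before any averaging, so that the bounded-second-moment hypotheses on $\{X_k\}$ and $\{y_k\}$ are used solely to guarantee that the expectations on both sides are finite and the monotonicity step is legitimate.
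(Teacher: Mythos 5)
Your proof is correct and follows exactly the route the paper intends: the lemma's statement itself cites the triangle inequality and submultiplicativity as the whole argument, and your write-up simply makes explicit the pathwise chaining of those two bounds, the squaring of nonnegative scalars, and the monotonicity of expectation. Nothing differs in substance from the paper's (implicit, one-line) proof.
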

\begin{lemma} \label{Lemma: Ineq Dep}
	Suppose that $(X_k,y_k)$ are in general dependent, but $\{X_k\}$ has independent increments, i.e. $(X_k,X_l)$ are independent for $k\neq l$. Then \\
	\begin{align*}
		\expec{\vnorm{\sum_{k=0}^{N-1} F_k X_k y_k}^2} 
		&\leq \sum_{k=0}^{N-1} \mnorm{F_k}^2 \sqrtp{ \expec{\mnorm{X_k}^4}} \\
		& \qquad  \times \sqrtp{\expec{ \left( \sum_{k=0}^{N-1}\vnorm{y_k}^2 \right)^2 }} .
	\end{align*}
\end{lemma}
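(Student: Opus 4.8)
The plan is to peel off the matrix structure first with Lemma~\ref{Lemma: Ineq Triangle Sub-multiplicative} and then reduce everything to two applications of the Cauchy--Schwarz inequality---one over the summation index $k$ and one with respect to the expectation---arranged so that the (possibly dependent) pair $(X_k,y_k)$ never appears inside a single moment. First I would apply Lemma~\ref{Lemma: Ineq Triangle Sub-multiplicative} to pass from the vector norm of the sum to a scalar sum of norms,
\begin{align*}
\expec{\vnorm{\sum_{k=0}^{N-1} F_k X_k y_k}^2} \leq \expec{\left(\sum_{k=0}^{N-1}\mnorm{F_k}\,\mnorm{X_k}\,\vnorm{y_k}\right)^2}.
\end{align*}

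Next I would group the summand as the product of $\mnorm{F_k}\mnorm{X_k}$ and $\vnorm{y_k}$ and apply Cauchy--Schwarz over $k$, followed by Cauchy--Schwarz with respect to the expectation, to obtain
\begin{align*}
\expec{\left(\sum_{k}\mnorm{F_k}\mnorm{X_k}\vnorm{y_k}\right)^2} &\leq \expec{\left(\sum_k \mnorm{F_k}^2\mnorm{X_k}^2\right)\left(\sum_k \vnorm{y_k}^2\right)} \\
&\leq \sqrtp{\expec{\left(\sum_k \mnorm{F_k}^2\mnorm{X_k}^2\right)^2}}\,\sqrtp{\expec{\left(\sum_k \vnorm{y_k}^2\right)^2}}.
\end{align*}
The second factor is already the $y$-term of the claim, so it remains to bound the first factor by $\sum_k \mnorm{F_k}^2\sqrtp{\expec{\mnorm{X_k}^4}}$.

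For that last factor I would expand the square into the double sum $\sum_{k,l}\mnorm{F_k}^2\mnorm{F_l}^2\expec{\mnorm{X_k}^2\mnorm{X_l}^2}$ and bound each mixed moment by Cauchy--Schwarz, $\expec{\mnorm{X_k}^2\mnorm{X_l}^2}\leq\sqrtp{\expec{\mnorm{X_k}^4}}\sqrtp{\expec{\mnorm{X_l}^4}}$. The resulting double sum is the perfect square $\big(\sum_k \mnorm{F_k}^2\sqrtp{\expec{\mnorm{X_k}^4}}\big)^2$, whose square root is exactly the first factor above; combining the three inequalities yields the lemma.

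The argument is essentially bookkeeping, so there is no deep obstacle; the only real subtlety is the ordering of the two Cauchy--Schwarz steps, which is dictated by the purpose of the lemma---tolerating dependence between $X_k$ and $y_k$. Keeping the $X$-factor and the $y$-factor on opposite sides of every inequality is essential: a careless grouping pairing $\mnorm{X_k}$ with $\vnorm{y_k}$ would leave a joint moment $\expec{\mnorm{X_k}^2\vnorm{y_k}^2}$ that cannot be decoupled without exactly the independence hypothesis this lemma is meant to dispense with. I also note that the stated hypothesis that $\{X_k\}$ has independent increments makes the off-diagonal moments $\expec{\mnorm{X_k}^2\mnorm{X_l}^2}$ factor, but the Cauchy--Schwarz bound above holds regardless, so independence of the $X_k$ among themselves is not strictly required for this particular estimate.
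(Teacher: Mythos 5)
Your proof is correct and follows essentially the same route as the paper's: the triangle/sub-multiplicativity lemma, Cauchy--Schwarz over the index $k$, Cauchy--Schwarz with respect to the expectation, and then the double-sum expansion with a final Cauchy--Schwarz on the mixed moments $\expec{\mnorm{X_k}^2\mnorm{X_l}^2}$ to recover the perfect square. Your closing observation is also accurate: the paper's own proof never uses the independence of the $X_k$ among themselves either, so that hypothesis is indeed superfluous for this particular bound.
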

\begin{proof}
	\begin{align*}
		& \expec{\mnorm{\sum_{k=0}^{N-1} F_k X_k y_k}^2} \leq \expec{ \left( \sum_{k=0}^{N-1} \mnorm{F_k}\mnorm{X_k} \vnorm{y_k}  \right)^2 } \\
		& \quad\leq \expec{ \sum_{k=0}^{N-1} \mnorm{F_k}^2 \mnorm{X_k}^2 \sum_{k=0}^{N-1} \vnorm{y_k}^2 } \\
		& \quad \leq \sqrtp{ \expec{ \left( \sum_{k=0}^{N-1} \mnorm{F_k}^2\mnorm{X_k}^2 \right)^2 } \expec{ \left( \sum_{k=0}^{N-1}\vnorm{y_k}^2 \right)^2 }}
	\end{align*}
	where the first inequality follows from Lemma~\ref{Lemma: Ineq Triangle Sub-multiplicative}, the second follows by applying the Cauchy-Schwarz inequality, and the last one follows by applying  again the Cauchy-Schwarz inequality but with respect to the expectation. To complete the proof, we find a bound on the first term of the last inequality. We have
	\begin{align*}
		&\expec{ \left( \sum_{k=0}^{N-1} \mnorm{F_k}^2\mnorm{X_k}^2 \right)^2} \\
		& \qquad  = \sum_{k,l=0}^{N-1} \mnorm{F_k}^2 \mnorm{F_l}^2 \expec{\mnorm{X_k}^2 \mnorm{X_l}^2} \\
		& \qquad \leq \sum_{k,l=0}^{N-1} \mnorm{F_k}^2 \mnorm{F_l}^2 \sqrtp{ \expec{\mnorm{X_k}^4} \expec{\mnorm{X_l}^4}} \\
		& \qquad \leq \left(\sum_{k=0}^{N-1} \mnorm{F_k}^2 \sqrtp{ \expec{ \mnorm{X_k}^4 } } \right)^2,
	\end{align*}
	where the first inequality is obtained by using the Cauchy-Schwarz inequality with respect to expectations. Finally, putting the results all together completes the proof.
\end{proof}
\begin{lemma} \label{Lemma: Ineq Ind}
	Suppose that $(X_k, y_k)$ are independent for $k=0,1, \cdots N-1$. Then 
	\begingroup\makeatletter\def\f@size{8.5}\check@mathfonts
	\def\maketag@@@#1{\hbox{\m@th\large\mnormalfont#1}}
	$$ \expec{\vnorm{ \sum_{k=0}^{N-1} F_kX_k y_k }^2} \leq \left( \sum_{k=0}^{N-1} \mnorm{F_k}\sqrtp{ \expec{ \mnorm{X_k}^2 } \expec{\vnorm{y_k}^2} } \right)^2.$$\endgroup
\end{lemma}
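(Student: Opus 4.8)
The plan is to follow the same opening move as in the proof of Lemma~\ref{Lemma: Ineq Dep}: first invoke Lemma~\ref{Lemma: Ineq Triangle Sub-multiplicative} to replace the squared norm of the vector sum by the square of a sum of nonnegative scalar random variables,
\begin{align*}
\expec{\mnorm{\sum_{k=0}^{N-1} F_k X_k y_k}^2} \leq \expec{\left(\sum_{k=0}^{N-1} \mnorm{F_k}\mnorm{X_k}\vnorm{y_k}\right)^2}.
\end{align*}
Writing $a_k := \mnorm{F_k}\mnorm{X_k}\vnorm{y_k} \geq 0$, the right-hand side expands as $\sum_{k,l=0}^{N-1}\expec{a_k a_l}$, which reduces the problem to estimating the pairwise expectations of these scalars.

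The key step, which is where this argument diverges from that of Lemma~\ref{Lemma: Ineq Dep}, is that here I would not appeal to any cross-time independence of $\{X_k\}$; instead I would bound every term (diagonal and off-diagonal alike) by the Cauchy-Schwarz inequality with respect to the expectation, $\expec{a_k a_l} \leq \sqrtp{\expec{a_k^2}\expec{a_l^2}}$. The double sum of these products then factors as a perfect square,
\begin{align*}
\sum_{k,l=0}^{N-1}\expec{a_k a_l} \leq \left(\sum_{k=0}^{N-1}\sqrtp{\expec{a_k^2}}\right)^2.
\end{align*}

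Finally I would use the per-index independence of $(X_k,y_k)$ to split the mixed second moment of each $a_k$, obtaining $\expec{a_k^2} = \mnorm{F_k}^2 \expec{\mnorm{X_k}^2}\expec{\vnorm{y_k}^2}$ and hence $\sqrtp{\expec{a_k^2}} = \mnorm{F_k}\sqrtp{\expec{\mnorm{X_k}^2}\expec{\vnorm{y_k}^2}}$. Substituting this into the previous display yields exactly the claimed bound. There is no genuine obstacle here: the only mildly delicate point is recognizing that summing $\sqrtp{\expec{a_k^2}\expec{a_l^2}}$ over all ordered pairs collapses to $\left(\sum_k\sqrtp{\expec{a_k^2}}\right)^2$, and that the independence hypothesis is invoked only at the very last step to factor the single-index moment, so that no assumption about the joint behaviour of $X_k$ and $X_l$ for $k\neq l$ is ever needed.
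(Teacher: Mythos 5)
Your proposal is correct and follows essentially the same route as the paper's own proof: Lemma~\ref{Lemma: Ineq Triangle Sub-multiplicative}, expansion into a double sum, Cauchy--Schwarz with respect to expectations on every pair, and the per-index independence of $(X_k,y_k)$ to factor $\expec{\mnorm{X_k}^2\vnorm{y_k}^2}$. The only difference is the immaterial order in which the perfect-square collapse and the independence factorization are applied, and your closing observation—that no cross-index independence of $\{X_k\}$ is needed—is exactly what distinguishes this lemma from Lemma~\ref{Lemma: Ineq Dep}.
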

\begin{proof}
	\begin{align*}
		&\expec{\vnorm{ \sum_{k=0}^{N-1} F_k X_k y_k }^2} \leq 
		\expec{ \left( \sum_{k=0}^{N-1} \mnorm{F_k} \mnorm{X_k} \vnorm{y_k} \right)^2 } \\
		& ~= \sum_{k,l=0}^{N-1} \expec{ \Big( \mnorm{F_k}\mnorm{X_k} \vnorm{y_k}\Big) \Big( \mnorm{F_l} \mnorm{X_l} |\vnorm{y_l}  \Big)} \\
		& ~ \leq \sum_{k,l=0}^{N-1} \mnorm{F_k} \mnorm{F_l} \sqrtp{ \expec{ \mnorm{X_k}^2 \vnorm{y_k}^2} \expec{ \mnorm{X_l}^2 |\vnorm{y_l}^2  } } \\
		& ~ \leq  \left( \sum_{k=0}^{N-1} \mnorm{F_k} \sqrtp{ \expec{ \mnorm{X_k}^2 } \expec{\vnorm{y_k}^2} } \right)^2,
	\end{align*}
	where the first inequality follows from Lemma~\ref{Lemma: Ineq Triangle Sub-multiplicative}, the second inequality follows from applying the Cauchy Schwarz inequality with respect to expectations, and the last one is a result of the mutual independence of $(X_k, y_k)$. 
\end{proof}
\begin{lemma} \label{Lemma: Ineq Ind zero}
	Suppose that $\expec{X_k} = 0$, $\{X_k\}$ has independent increments, i.e. $(X_k,X_l)$ are independent for $k\neq l$, and $(X_k, y_l)$ are independent for $k \geq l$ with $k,l=0,1, \cdots N-1$. Then
	$$ \expec{\vnorm{ \sum_{k=0}^{N-1} F_k X_k y_k }^2} \leq \sum_{k=0}^{N-1}\mnorm{F_k}^2 \expec{\mnorm{X_k}^2} \expec{\vnorm{y_k}^2}.$$
\end{lemma}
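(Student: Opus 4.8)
The plan is to expand the squared norm, show that all off-diagonal (cross) terms vanish by a zero-mean/independence argument, and then bound the surviving diagonal terms by sub-multiplicativity together with the independence of $X_k$ and $y_k$. Setting $d_k := F_k X_k y_k$, the first step is the identity
\[ \expec{\vnorm{\sum_{k=0}^{N-1} d_k}^2} = \sum_{k=0}^{N-1}\expec{\vnorm{d_k}^2} + \sum_{k\neq l}\expec{d_k^* d_l}, \]
so everything reduces to killing the cross expectations and estimating the diagonal ones.

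For the cross terms I would set up the filtration $\mathcal F_k := \sigma(y_0,\dots,y_k,\,X_0,\dots,X_{k-1})$. Then $y_k$ and the deterministic $F_k$ are $\mathcal F_k$-measurable while $X_k$ is independent of $\mathcal F_k$, so $\expec{d_k \mid \mathcal F_k} = F_k\,\expec{X_k}\,y_k = 0$ by the zero-mean hypothesis; that is, $\{d_k\}$ is a martingale-difference sequence. Orthogonality of martingale differences then gives $\expec{d_k^* d_l}=0$ for $k\neq l$: for $l>k$ one conditions on $\mathcal F_l$, uses that $d_k$ is $\mathcal F_l$-measurable, and replaces $\expec{d_l\mid\mathcal F_l}$ by $0$. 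Hence only the diagonal survives.

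For the diagonal terms I would apply sub-multiplicativity of the spectral norm, $\vnorm{F_k X_k y_k}\le \mnorm{F_k}\,\mnorm{X_k}\,\vnorm{y_k}$, and the determinism of $F_k$, to obtain $\expec{\vnorm{d_k}^2}\le \mnorm{F_k}^2\,\expec{\mnorm{X_k}^2\vnorm{y_k}^2}$; the $k=l$ instance of the hypothesis ``$(X_k,y_l)$ independent for $k\ge l$'' factorises this as $\mnorm{F_k}^2\,\expec{\mnorm{X_k}^2}\,\expec{\vnorm{y_k}^2}$, and summation over $k$ delivers the stated bound.

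The main obstacle is the vanishing of the cross terms, which genuinely needs $X_k$ to be independent of the whole $\sigma$-algebra $\mathcal F_k$ (equivalently, of the joint collection $\{y_k,X_l,y_l:l<k\}$) rather than merely pairwise independent of each factor; only then does conditioning replace $X_k$ by its zero mean. Phrasing the step through the filtration and the martingale-difference property is the cleanest way to invoke exactly the independence the feedback loop supplies, where $X_k$ plays the role of the Wiener-type increment $\tilde\Gamma_k$ that is independent of all past and present signals.
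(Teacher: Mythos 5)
Your proof is correct, and it takes a genuinely different route from the paper's --- one that is in fact sounder at the critical step. The paper's proof first invokes its Lemma~\ref{Lemma: Ineq Triangle Sub-multiplicative} to pass to $\expec{ \big( \sum_{k} \mnorm{F_k} \mnorm{X_k} \vnorm{y_k} \big)^2 }$, expands that scalar square, factors the $k<l$ cross terms as $\mnorm{F_k}\mnorm{F_l}\,\expec{\mnorm{X_k}\vnorm{y_k}\vnorm{y_l}}\,\expec{\mnorm{X_l}}$, and then sets them to zero ``because $X_k$ is zero-mean.'' But once norms have been taken, the zero-mean property is no longer available: $\expec{\mnorm{X_l}}=0$ forces $X_l=0$ almost surely, so for Wiener-type increments these cross terms are strictly positive, and discarding nonnegative terms from the right-hand side of an upper bound is not a valid move. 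Your order of operations avoids exactly this difficulty: you expand the squared norm while the vector structure (and hence the zero mean) is still intact, kill the cross expectations $\expec{d_k^* d_l}$, $d_k := F_kX_ky_k$, by conditioning via the martingale-difference property, and only then apply submultiplicativity and the $k=l$ independence to the surviving diagonal terms; this delivers the stated bound, indeed with the stronger intermediate identity $\expec{\vnorm{\sum_k d_k}^2}=\sum_k\expec{\vnorm{d_k}^2}$. Your closing caveat is also well taken: both your conditioning step and the paper's cross-term factorization require $X_l$ to be independent of the joint collection $(X_k,y_k,y_l)$ for $k<l$, not merely pairwise independent of each factor. That stronger property is what the feedback loop actually supplies (the increment $\tilde\Gamma_l$ is independent of all past and present loop signals), so the lemma is sound as used in the paper, but your filtration-based phrasing makes explicit the hypothesis that the pairwise wording leaves implicit.
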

\begin{proof}
	\begin{align*}
		&\expec{\vnorm{\sum_{k=0}^{N-1} F_k X_k y_k }^2} \leq
		\expec{ \left( \sum_{k=0}^{N-1} \mnorm{F_k} \mnorm{X_k} \vnorm{y_k} \right)^2 } \\
		& \quad = \sum_{k=0}^{N-1} \mnorm{F_k}^2 \expec{ \mnorm{X_k}^2 \vnorm{y_k}^2 } \\ 
		& \qquad \qquad  + \sum_{\substack{k,l=0 \\ k<l}}^{N-1} \mnorm{F_k} \mnorm{F_l} \mathbb E \Big[ \mnorm{X_k} \vnorm{y_k} \vnorm{y_l} \Big] \mathbb E \Big[\mnorm{X_l} \Big]\\ 
		& \qquad \qquad + \sum_{\substack{k,l=0 \\ k>l}}^{N-1} \mnorm{F_k} \mnorm{F_l} \mathbb E \Big[ \vnorm{y_k} \mnorm{X_l} \vnorm{y_l} \Big] \mathbb E \Big[\mnorm{X_k}\Big]\\
		& \quad = \sum_k^{N-1} \mnorm{F_k}^2\expec{\mnorm{X_k}^2 \vnorm{y_k}^2} +0 + 0\\
		& \quad = \sum_k^{N-1} \mnorm{F_k}^2\expec{\mnorm{X_k}^2} \expec{\vnorm{y_k}^2},
	\end{align*}
	where the first inequality follows by applying Lemma~\ref{Lemma: Ineq Triangle Sub-multiplicative}, and the first equality follows from the independence of $(X_k, y_l)$ when $k > l$ and the fact that $X_k$ has independent increments. The second equality follows because $X_k$ is zero-mean, and the last equality holds because the pair $(X_k,y_k)$ are mutually independent. 
\end{proof}

\subsection{Total \& Quadratic Variations of Deterministic Functions} \label{Section: Variations of M}
Let $\mathcal C$ denote the class of deterministic, matrix-valued functions $M$ that can be decomposed into two parts $M(t) = C(t) + D(t)$, where $C(t)$ is differentiable and $D(t)$ includes all the jumps (or discontinuities) of $M$, i.e.
\begin{equation} \label{Eqn: Function Decomposition}
	M(t) = C(t) + D(t); \quad \text{s.t.} \quad D(t) = \sum_j A_j \mathds 1(t-\tau_j),
\end{equation}	
where $\{A_j\}$ are constant matrices that correspond to the jumps at $\{\tau_j\}$, and $\mathds 1(t)$ is the Heaviside step function centered at zero. Note that if $M$ is a scalar function, $\mathcal C$ boils down to the class of functions with bounded absolute variations.

Define the total and quadratic variations of $M \in \mathcal C$ over the interval $[0,t]$ as
\begin{align*}
	\tv{t}{M} &:= \lim_{N\to \infty} \sum_{k=0}^{N-1} \mnorm{ M(t_{k+1}) - M(t_k) } \\
	\qv{t}{M} &:= \lim_{N\to \infty} \sum_{k=0}^{N-1} \mnorm{ M(t_{k+1}) - M(t_k) }^2,
\end{align*}
respectively, where $\mathcal P_N[0,t]$ (Section~\ref{Section: Partition}) is used to partition the interval $[0,t]$.
\begin{lemma} \label{Lemma: Variations}
	If $M \in \mathcal C$, then $\tv{t}{M}$ and $\qv{t}{M}$ are finite for any finite time $t$.
\end{lemma}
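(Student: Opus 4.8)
The plan is to exploit the defining decomposition $M(t) = C(t) + D(t)$ from (\ref{Eqn: Function Decomposition}) and to bound the variations of $M$ by those of its differentiable part $C$ and its pure-jump part $D$ treated separately. Since $\mnorm{M(t_{k+1}) - M(t_k)} \leq \mnorm{C(t_{k+1}) - C(t_k)} + \mnorm{D(t_{k+1}) - D(t_k)}$ for every $k$, summing over the partition $\mathcal P_N[0,t]$ gives $\tv{t}{M} \leq \tv{t}{C} + \tv{t}{D}$, while the elementary inequality $(a+b)^2 \leq 2a^2 + 2b^2$ yields $\qv{t}{M} \leq 2\,\qv{t}{C} + 2\,\qv{t}{D}$. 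It therefore suffices to establish finiteness for $C$ and $D$ individually and then recombine.

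For the differentiable part $C$, I would first note that on the compact interval $[0,t]$ its derivative is bounded, say $\sup_{0\leq s \leq t} \mnorm{C'(s)} =: L < \infty$, so that $C$ is Lipschitz and $\mnorm{C(t_{k+1}) - C(t_k)} \leq L\Delta_k$. The total variation is then dominated by $\sum_k L\Delta_k = Lt$, which is finite. For the quadratic variation, the same bound gives $\sum_k \mnorm{C(t_{k+1}) - C(t_k)}^2 \leq L^2 \sum_k \Delta_k^2 \leq L^2 \pnorm \sum_k \Delta_k = L^2 \pnorm\, t$, which tends to $0$ as $N\to\infty$ because $\pnorm \to 0$. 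Thus $\tv{t}{C} \leq Lt < \infty$ and $\qv{t}{C} = 0$, reflecting the classical fact that a continuous function of bounded variation carries no quadratic variation.

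For the pure-jump part $D(t) = \sum_j A_j \mathds 1(t-\tau_j)$, I would use that $D$ is piecewise constant with only finitely many jumps in the bounded interval $[0,t]$. Once the partition is fine enough that each jump location $\tau_j \in [0,t]$ lies in the interior of a distinct subinterval, the increment $D(t_{k+1}) - D(t_k)$ equals $A_j$ on the subinterval containing $\tau_j$ and vanishes on all others. Passing to the limit therefore yields $\tv{t}{D} = \sum_{j:\,\tau_j \leq t} \mnorm{A_j}$ and $\qv{t}{D} = \sum_{j:\,\tau_j \leq t} \mnorm{A_j}^2$, both finite sums. Combining these with the bounds for $C$ closes the argument.

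The main obstacle I anticipate is twofold and largely technical rather than conceptual. First, one must justify that the differentiable part $C$ has a bounded derivative on $[0,t]$; this is immediate if $C$ is continuously differentiable (the intended reading), but would otherwise require excluding pathological differentiable functions with unbounded derivative. Second, the bookkeeping for $D$ in the limit must be done carefully: as $\pnorm \to 0$ one needs each jump to be eventually isolated in its own subinterval, so that cross-terms between the jump and the continuous part become negligible and the jump contributions converge cleanly to $\sum_j \mnorm{A_j}$ and $\sum_j \mnorm{A_j}^2$. Neither difficulty is deep, so the essence of the proof is simply the decomposition $M=C+D$ together with the observation that a Lipschitz part contributes finite total and \emph{zero} quadratic variation, while a finite-jump part contributes finite total variation and a finite quadratic variation equal to the sum of its squared jumps.
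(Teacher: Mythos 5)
Your proposal is correct and takes essentially the same approach as the paper: the same decomposition $M = C + D$, the same triangle-inequality bound $\tv{t}{M} \leq \tv{t}{C} + \tv{t}{D}$, and the same conclusions that the differentiable part has finite total variation and zero quadratic variation while the jump part contributes $\sum_j \mnorm{A_j}$ and $\sum_j \mnorm{A_j}^2$. The only cosmetic deviations are your use of $(a+b)^2 \leq 2a^2 + 2b^2$ in place of the paper's explicit cross term $2\lim_{N\to\infty}\sum_k \mnorm{\tilde C_k}\mnorm{\tilde D_k}$ (which it shows vanishes), and your bound $\tv{t}{C} \leq Lt$ with $L := \sup_{0\leq s\leq t}\mnorm{\dot C(s)}$ in place of the paper's $\int_0^t \mnorm{\dot C(\tau)}\,d\tau$ --- both rest on the same (implicit in the paper, explicit in your write-up) reading that the derivative of $C$ is bounded on compact intervals.
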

\begin{proof}
	Since $M \in \mathcal C$, we exploit the decomposition in (\ref{Eqn: Function Decomposition}) to write the total variation of $M$ as
	\begin{align*}
		\tv{t}{M} &= \lim_{N\to \infty} \sum_{k=0}^{N-1} \mnorm{ \tilde C_k + \tilde D_k } \\
		& \leq \lim_{N\to \infty} \sum_{k=0}^{N-1} \mnorm{ \tilde C_k} + \lim_{N\to \infty} \sum_{k=0}^{N-1} \mnorm {\tilde D_k } \\
		& = \tv{t}{C} + \tv{t}{D},
	\end{align*} 
	where the notation in Section~\ref{Section: Increments} for the increments is used, i.e. $\tilde C_k := C(t_{k+1}) - C(t_k)$.
	$\tv{t}{C}$ is shown to be finite by exploiting the fact that $C$ is differentiable, i.e.
	\begin{align*}
		\tv{t}{C} = \lim_{N\to \infty} \sum_{k=0}^{N-1} \mnorm{ \frac{\tilde C_k}{\Delta_k} } \Delta_k = \int_0^t \mnorm{\dot C(\tau)} d\tau.
	\end{align*}
	The integral is finite, because $C$ is differentiable and thus $\mnorm{\dot C(t)}$ is finite for finite time. Furthermore, $\tv{t}{D}$ is finite because
	\begin{align*}
		\tv{t}{D} &= \lim_{N\to \infty} \sum_{k=0}^{N-1} \mnorm{ \sum_j A_j \Big(\mathds{1}(t_{k+1} - \tau_j) - \mathds{1}(t_k - \tau_j) \Big) } \\
		&= \sum_j \mnorm{A_j},
	\end{align*}
	where the second equality follows from the fact that the increments of the Heaviside step function are zeros everywhere except at the jumps $\{\tau_j\}$.
	Therefore, $\tv{t}{M}$ is finite over any bounded interval $[0,t]$ with an upper bound given by
	\begin{align*}
		\tv{t}{M} \leq \int_0^t \mnorm{\dot C(\tau)} d\tau + \sum_j \mnorm{A_j}.
	\end{align*}
	Similar reasoning can be carried out to show that $\qv{t}{M}$ is also finite. In fact, using similar arguments we obtain
	\begin{align*}
		\qv{t}{M} & \leq \qv{t}{C} + \qv{t}{D} + 2\lim_{N\to \infty} \sum_{k=0}^{N-1} \mnorm{\tilde C_k} \mnorm{\tilde D_k} \\
		& \leq 0 + \sum_j \mnorm{A_j}^2 + 0.
	\end{align*}
\end{proof}
\subsection{Second Moment of Quadratic Variations} \label{Section: Bounded Quadratic Variation}
The goal of this appendix is to show that the second moment of the quadratic variation of the solutions of (\ref{Eqn: Single SIE Ito}) is finite over finite time. For simplicity, we consider the scalar case with $w=0$, $M_0 = 0$ and $\cov \Gamma = 1$; however the same analysis can be carried out for the general case. Over the partition $\mathcal P_N[0,t]$, (\ref{Eqn: Single SIE Ito}) can be expressed as $ y_k = \sum\limits_{l=0}^{k-1} M(t_k-t_l) y_l \tilde \gamma_l$ and thus the increments can be written as
$$ \tilde y_k = M(\Delta_k) y_k \tilde \gamma_k + \sum_{l=0}^{k-1} \big(M(t_{k+1}-t_l) - M(t_k - t_l)\big) y_l \tilde \gamma_l.$$
Using the inequality $(a+b)^4 \leq 8(a^4 + b^4)$ and the Cauchy Schwarz inequality, we obtain
\begin{align*}
	\expec{\tilde y_k^4}	& \leq 8 M^4(\Delta_k) \expec{y_k^4 \tilde \gamma_k^4} + 8L_M^4 \cov \Delta^2 t_k^2 \expec{\left(\sum_{l=0}^{k-1} y_l^2 \tilde \gamma_l^2\right)^2},
\end{align*}
where $L_M$ is the Lipschitz constant of $M$ and $\cov \Delta$ is defined in Section~\ref{Section: Partition}. Using Lemma~\ref{Lemma: Ineq Ind}, $\expec{\tilde \gamma_k^4} = 3 \Delta_k^2$, and Assumption~\ref{Ass: M} yield the upper bound $\expec{\tilde y_k^4} \leq c(t) \cov \Delta^2$, where $c(t)  = 24 \left(c_M^4(t) + L_M^4 t^4 \right)\sup_{\tau \leq t} \expec{y^4(\tau)} $. Note that $\sup_{\tau \leq t}\expec{y^4(\tau)}$ is shown to be finite in the corollary of \cite[Thm 3.1]{ito1979existence}. Therefore, using the Cauchy-Schwarz inequality with respect to expectations, the second moment of the quadratic variation  over $\mathcal P_N[0,t]$ can be bounded as follows
\begin{align*}
	\expec{\left(\sum_{k=0}^{N-1} \tilde y_k^2 \right)^2}	& \leq  \left( \sum_{k=0}^{N-1}\sqrt{\expec{\tilde y_k^4}}\right)^2 \leq  c(t) \left( \sum_{k=0}^{N-1}\cov \Delta\right)^2.
\end{align*}
Finally, taking the limit as $N\to \infty$ shows that $\expec{\langle y \rangle^2(t)}$ is bounded for finite time $t$.
\end{document}